\documentclass[12pt,a4paper]{article}
\usepackage[centertags]{amsmath}
\usepackage{amsfonts,amsthm,amssymb}
\usepackage{amssymb}
\usepackage{amsmath}
\usepackage{graphicx}
\usepackage{booktabs}
\usepackage{appendix}
\usepackage[hmargin=3.175cm,vmargin=3.150cm]{geometry}
\usepackage{tikz}
\usepackage{url}
\usepackage{graphicx}
\usepackage{accents}
\usepackage{booktabs}
\usepackage{appendix}
\usepackage{etoolbox}
\usepackage{xparse}
\setcounter{MaxMatrixCols}{10}
\usepackage{enumitem}
\usepackage{authblk}
\usepackage{color}
\usepackage{chngcntr}
\usepackage{authblk}
\usepackage{multirow}
\usepackage[authoryear]{natbib}
\usepackage{float}
\bibliographystyle{abbrvnat}

\usetikzlibrary{shapes,arrows}
\setcounter{MaxMatrixCols}{10}

\linespread{1.5}\vfuzz2pt \hfuzz2pt

\newcommand{\real}{\mathbb{R}}

\newcommand{\gor}{\rightarrow}

\newcommand{\p}{\mathbf{P}}

\newcommand{\E}{\mathbb{E}}
\newcommand{\BI}{\mathcal{BO}}

\newcommand{\CI}{\mathcal{CI}}

\newtheorem{lemma}{Lemma}
\newtheorem{proposition}{Proposition}

\newtheorem{theorem}{Theorem}

\newtheorem{conjecture}[theorem]{Conjecture}

\theoremstyle{definition}
\newtheorem{definition}{Definition}

\newtheorem*{proposition*}{Proposition}
\newtheorem*{conj*}{Conjecture}


\begin{document}
\title{Robust Forecast Aggregation%
\thanks{This research is supported by Israel Science Foundation grants 2021296 and 2018889, United States-Israel Binational Science Foundation and National Science Foundation grant 2016734, the German-Israel Foundation grant I-1419-118.4/2017, the Ministry of Science and Technology grant 19400214, Technion VPR grants, and the Bernard M. Gordon Center for Systems Engineering at the Technion.}%
}
\author{Itai Arieli}
\affil{\small Faculty of Industrial Engineering and Management, Technion -- Israel Institute of Technology. E-mail: iarieli@technion.ac.il}
\author{Yakov Babichenko}
\affil{\small Faculty of Industrial Engineering and Management, Technion -- Israel Institute of Technology. E-mail: yakovbab@technion.ac.il}
\author{Rann Smorodinsky}
\affil{\small Faculty of Industrial Engineering and Management, Technion -- Israel Institute of Technology. E-mail: rann@ie.technion.ac.il}


\renewcommand\Authands{, and }
\renewcommand\footnotemark{}
\maketitle

\vspace*{-6mm}
\begin{abstract}
\noindent Bayesian experts who are exposed to different evidence often make contradictory probabilistic forecasts. An aggregator, ignorant of the underlying model, uses this to calculate her own forecast. We use the notions of {\em scoring rules} and {\em regret} to propose a natural way to evaluate an {\em aggregation scheme.} We focus on a binary state space and construct low regret aggregation schemes whenever there are only two experts which are either Blackwell-ordered or receive conditionally i.i.d. signals. In contrast, if there are many experts with conditionally i.i.d. signals, then no scheme performs (asymptotically) better than a $(0.5,0.5)$ forecast. 
\end{abstract}

\maketitle


\section{Introduction}\label{sec:intro}
Just the other day we were planning our weekend activities and looked at weather forecast for Tel Aviv on Friday, January 27. In particular, what interested us was the probability of rain. Accuweather's precipitation forecast was  77\%, Yahoo!'s was 60\%, and the Weather Channel's was 90\% (all three screenshots are provided in the Appendix). It was unclear to us how to aggregate these conflicting forecasts, although we knew that all three were from reputable sources that were using sound weather models and reliable data.

Our dilemma is not unique. In fact, many of us face conflicting advice from experts on a daily basis: forecasts from reliable pollsters on the outcome of presidential elections, medical prognoses from trusted physicians, investment advice from experienced financial pundits, and more.

This challenge is in fact inherent in many governing bodies. In the political arena we often see ministers and legislators who as elected officials must decide on critical issues and policies while lacking subject-matter expertise. These publicly elected officials dictate health care policies, decide on military development and deployment, financial regulation, and so on, without any  medical/military/financial background. As a result, they reach out to experts for advice, such as ad-hoc committees, civil servants with years of experience, lobbyists, and more. Similar to elected officials, board members of commercial companies are often seasoned  business people with managerial experience who often lack  industry-specific knowledge. These board members essentially need to aggregate input from various experts in order to make a decision.

We consider a model with two types of agents. A set of non-strategic \emph{experts} share a common prior over the state space. Each expert receives a private signal that induces a posterior distribution (the expert's forecast). In contrast, an \emph{ignorant aggregator} is not familiar with the common prior and the signal structure (we refer to this pair as the information structure). The aggregator observes the experts' forecasts and must aggregate them into a single forecast. How should we evaluate the aggregator and what is his best course of action? These are the questions we are interested in.

To study this we first elucidate four aspects of the model:
\begin{itemize}
\item How to measure the accuracy of a forecast?
A natural and prevalent family of measures of forecast accuracy, and the one we adopt here, is that of proper scoring rules (see \citep{brier}) and in particular the square loss function. The appealing property of proper scoring rules is that they induce a Bayesian expert to be truthful about his forecast.
\item
How to model an expert?
An expert has some prior distribution over the state space and receives a private signal that he then uses to compute a posterior distribution using Bayes rule. All experts share a common prior but signals are private.
\item
How to model an ignorant aggregator?
An aggregator is ignorant if his forecast is a function of a vector of experts' forecasts only. In particular, an ignorant aggregator's forecast is not a function of the underlying information structure. Obviously, his forecast is inferior compared with  some hypothetical \emph{omniscient expert}. This omniscient expert knows the vector of forecasts and in addition knows the information structure and the vector of private signals observed by the experts.
The score attained by the hypothetical omniscient expert's forecast serves as a benchmark for out ignorant aggregator.
\item
How to evaluate an ignorant aggregator's performance? For a given scoring rule and information structure one can compute the ignorant aggregator's \emph{relative loss}, which is the difference between his expected score and that of the omniscient expert. However, given that the ignorant aggregator is unfamiliar with the information structure, it is not clear whether it should serve to evaluate his performance. Here we adopt the robust (or, equivalently, the adversarial) approach. We say that the ignorant aggregator can guarantee a regret of $\alpha$ in a given class of information structures if his relative loss is at most $\alpha$, for \emph{all} information structures in the given class.
\end{itemize}

One important aspect of our model is that it pertains to a single interaction. In particular, the aggregator has no prior experience with the experts and he cannot observe past realizations.
We argue that this single interaction condition is realistic in some settings. Aggregating prognoses from physicians is typically a one-off challenge, for example. Similarly, aggregating economic forecasts is important when deciding on a mortgage and for many of us this is the only time it is called for. However, even if aggregators repeatedly interact with experts and have an opportunity to learn, there is always the challenge of the first interaction and often, typically with publicly elected officials, the first impression matters.

\subsection{Our Contribution}
Our results are as follows. We start with the case where there are two experts. We notice (see Proposition \ref{pro:g} in Section \ref{sec:general info struct}) that without any restriction on the information structure the ignorant aggregator cannot guarantee any regret below $\frac{1}{4}$. A regret of $\frac{1}{4}$ is trivially achievable by constantly announcing $\frac{1}{2}$ \emph{irrespective of experts' forecasts} (we recall that we measure accuracy by the square loss function). We proceed with the case where the two experts are Blackwell-ordered, in the sense that one is strictly better informed than the other. In this case Theorem \ref{th:m} provides an \emph{exact} formula for the minimal regret, $\frac{1}{8}(5\sqrt 5-11)\approx 0.0225$, as well as the aggregation scheme that guarantees it. We introduce an aggregation scheme that is based on the \emph{precision} of the two forecasts that guarantees a regret of $\frac{1}{8}(5\sqrt 5-11)$, and prove that no aggregation scheme can guarantee a regret below $\frac{1}{8}(5\sqrt 5-11)$.


We then study the case where the experts' signals are distributed independently conditional on the realized state. In such an environment the prior and the experts' forecasts are a sufficient statistic to perform the optimal Bayesian aggregation. In our case, the ignorant aggregator does not know the prior. A natural approach in such an environment is to guess the prior and perform the aggregation \emph{as if} the guess equaled the actual prior. This approach yields the \emph{average prior aggregation scheme}, where the guess is simply the average of the two forecasts and results in a regret of $0.0260$ (see Theorem \ref{th:iid}). This need not be the optimal scheme but we show that the optimal scheme cannot do much better and is bounded below by $\frac{1}{8}(5\sqrt 5-11)\approx 0.0225$, the exact same regret as that of the Blackwell-order setting. We discuss the gap of $0.0260 - 0.0225=0.0035$ and some related conjectures in Appendix \ref{app:ci}.

Finally, we consider the case of a large number of experts $n$ with independent and identically distributed (i.i.d.) signals.
For this case we show in Theorem \ref{th:freg} that poor performance is unavoidable and the best possible regret that can be guaranteed by the aggregator, as a function of $n$, approaches $\frac{1}{4}$ as $n\rightarrow \infty$. This result highlights the significance of the common prior assumption in large-scale aggregation of information. We note that in order to apply optimal information aggregation (i.e., the aggregation achieved by the omniscient expert) it is sufficient for the aggregtor to know the prior; optimal aggregation does not require a knowledge of the information structure or the signals, but of just the forecasts and the prior. Without the knowledge of the prior, the ignorant aggregator cannot identify the state with probability one, as has been demonstrated by \citep{PSM} and \citep{BabAriSmo2017a}. 

We show a much stronger negative result. Not only can the aggregator not identify the state, he cannot even aggregate the information into some intelligent forecast other than $\frac{1}{2}$. In other words, for the worst-case information structure, the aggregator's (approximately) best course of action is to ignore the forecasts and predict $\frac{1}{2}$, which guarantees him a loss of $\frac{1}{4}$. There is no procedure that significantly improves upon this one.

 \subsection{Related Literature}

Forecast aggregation is a strand of the statistics literature both within the classic approach and within the more modern ``machine learning'' approach. It encompasses three lines of research whose focus is different from ours.

First, within the Bayesian paradigm forecast aggregation studies the actual structure of the Bayesian aggregation  scheme for a variety of parametric information structures such as information that induces independent log-Normal posteriors for the experts \citep{SFU}, or the \emph{partial information framework} \citep{EPSU,SPU}.%
\footnote{See \citep{SFU} for an excellent review of this literature.}
We primarily depart from these papers by considering a forecast aggregator who does not share the experts' common prior or their information structure. All he knows is the actual forecasts.

Second, it considers data-driven heuristics for aggregating forecasts. Given some past data of forecasts and realizations, it studies the performance of various heuristics on the data set. This results in optimal heuristics for specific data structures (see, e.g., \citep{BD} for an excellent literature survey). Our approach and the resulting heuristics are agnostic to any available data as long as the underlying assumption on the information structure is valid.

The machine-learning community has developed techniques for integrating the advice of multiple experts, whether in the form of forecasts or in the form of proposed action (such as portfolio selection in a financial market setting). The goal of these techniques is regret minimization. In that model an ignorant aggregator (the ``machine'' in their jargon) repeatedly receives input from multiple experts and takes an action based on a vector of the experts' advice. At each stage, the aggregator is paid according to some function that depends on her action and the temporal state of nature.
The ``regret'' measures how much worse, in hindsight, the aggregator performs as compared with the best expert.
The literature provides a variety of settings and schemes for choosing actions such that the average per-stage regret goes to zero. The reader is referred to \citep{cesa} for a review on this topic. The major distinction with our work is that it considers a repeated setting whereas we study a one-shot model, about which the machine learning literature is mute.

Another related research topic is that of expert testing and in particular multiple expert testing. \citep{alnajjar} and  \citep{Feinbergtewart} ask how a policy maker should identify which of the experts is better informed. The advisee in their case does not aggregate the advice but rather chooses which of the experts to follow. In contrast with our model, the expert-testing setting does not assume a common prior among the experts. It assumes that the advisee has many past observations consisting of experts' forecasts and realizations and is once again mute about single-stage interaction.
One natural test for ranking experts, in the context of prediction in financial markets, is that of portfolio returns. \citep{sandroni} shows that indeed the better-informed expert outperforms the less-informed one in the long run.

Finally, in a companion paper \citep{BabAriSmo2017a} we use a similar model to study the conditions under which an aggregator can perfectly learn the state of the world (and in particular obtain a regret of zero) whenever there are many i.i.d. experts.

\section{Model}
Let $\Omega=\{0,1\}$ denote the binary state of nature. An {\em information structure} for $n$ experts, denoted by $(S,\p)$, consists of some $n$-dimensional signal space, $S=S_1\times\cdots\times S_n$, and a distribution $\p\in\Delta(\Omega\times S)$.
Let  $\mu=\p(\omega=1)$  denote the prior probability of the state $\omega=1$.
Expert $i$ receives a signal $s_i\in S_i$, drawn according to $\p$, and announces his \emph{forecast}, $x_i(s_i)=\p(\omega=1|s_i)$ (his conditional  probability for the state $\omega=1$).

We consider
an \emph{ignorant  aggregator} who is ignorant with respect to the information structure and observes only the vector of experts' forecasts $x(s)=(x_1(s_1),\ldots,x_n(s_n))$. An \emph{aggregation scheme} of $n$ forecasts is a function $f:[0,1]^n \to[0,1]$.
We study settings where the ignorant  aggregator may have partial knowledge about the information structure. This partial knowledge takes the form of a subset of such structures (a class of information structures). We compare the performance of the ignorant aggregator with that of the \emph{omniscient expert}, i.e., an expert who knows $\p$ and observes all the signals of all the agents. Note that this is the most competitive benchmark we can set to evaluate an aggregator's performance. For a discussion of less competitive benchmarks, see Section \ref{sec:benchmark}.

The basic building block for evaluating the performance of the aggregators is a {\em scoring rule}, which is a function $l:\Omega\times[0,1]\gor\real$. In words, it assigns a {\em loss} to any pair of realization and forecast (probability of the state $\omega=1$).  A {\em proper scoring rule} \citep{brier}  is a scoring rule for which the minimal expected loss is obtained when the forecast is equal to the actual distribution. Hence, a proper scoring rule incentivizes the omniscient expert to report the posterior probability. One prominent example of a proper scoring rule, which is central to our analysis, is the square loss function:
\begin{align*}
l(\omega,x)=\begin{cases}
(1-x)^2 \text{ if } \omega =1 \\
x^2 \ \ \ \ \ \text{ if } \omega =0.
\end{cases}
\end{align*}

Conditional on the information structure $(S,\p)$, the ignorant aggregator can only hope to do as well as the omniscient expert. The omniscient expert's best prediction is obtained using Bayes rule and is equal to $\hat{x}(s)=\p(\omega=1 |s)$, where $s=(s_1,\ldots,s_n)$. Hence the expected relative loss of the aggregation scheme $g$ is
$$L(g,\p)=
E_\p [l(\hat{x}(s),\omega)-l(f(x(s)),\omega)].$$

Given a class of information structures, $\mathcal{C}$, the regret of the aggregation scheme $g$ over $\mathcal{C}$ is the expected relative loss in the worst-case scenario:%
\footnote{The term `regret' is inspired by terminology introduced by \citep{hannan} in the context of measuring success under a worst-case scenario. In a way this is also reminiscent of \citep{gilboa}'s notion of MinMax expectation.}
\begin{equation}\label{eq:reg}
R_\mathcal{C}(f)=\sup_{\p\in \mathcal{C}}L(f,\p).
\end{equation}

We start with a preliminary observation that provides an alternative formula for the relative loss of an aggregation scheme.
\begin{lemma}\label{lem:sqdis}
For every information structure $\p$ and aggregation scheme $f:[0,1]^n\rightarrow\mathbb{R}$ it holds that $$L(f,\p)=\E_{(\omega,s_1,\ldots,s_n)\sim \p} [(f(x_1(s_1),\ldots,x(s_n))-\hat{x}(s_1,\ldots,s_n))^2].$$
\end{lemma}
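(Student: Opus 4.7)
The plan is a short computation that combines the tower property with the standard bias--variance decomposition for the square loss. The key observation is that conditional on the signal vector $s=(s_1,\ldots,s_n)$, both the omniscient forecast $\hat{x}(s)$ and the aggregator's forecast $f(x(s))$ are deterministic numbers, whereas $\omega\mid s$ is a Bernoulli random variable with mean $\hat{x}(s)$ by the definition of $\hat{x}$.

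Using this, I would compute the two conditional expected losses in one stroke. For any fixed $c\in[0,1]$,
\[
\E_{\p}\bigl[l(c,\omega)\bigm| s\bigr]=(1-c)^2\hat{x}(s)+c^2(1-\hat{x}(s))=(c-\hat{x}(s))^2+\hat{x}(s)(1-\hat{x}(s)),
\]
after expanding and collecting terms. Substituting $c=\hat{x}(s)$ yields conditional loss $\hat{x}(s)(1-\hat{x}(s))$, i.e., the conditional variance of $\omega$ given $s$; substituting $c=f(x(s))$ yields $(f(x(s))-\hat{x}(s))^2+\hat{x}(s)(1-\hat{x}(s))$. Subtracting these two identities cancels the variance term and leaves precisely $(f(x(s))-\hat{x}(s))^2$.

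Applying the tower property to take the outer expectation over $s$ then gives the claimed formula for $L(f,\p)$. There is essentially no obstacle here: the content of the lemma is simply that $\hat{x}(s)$ is the $L^2$-projection of $\omega$ onto the $\sigma$-algebra generated by $s$, and the excess $L^2$-risk of any other predictor equals the squared $L^2$-distance from that projection. The only care needed is the bookkeeping in the expansion above, together with the sign convention in the definition of $L(f,\p)$ so that this non-negative expression is indeed the relative loss of the aggregator over the omniscient expert.
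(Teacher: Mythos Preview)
Your argument is correct and is essentially the same as the paper's: both condition on the signal vector $s$, use that $\omega\mid s$ is Bernoulli with mean $\hat{x}(s)$, and show the conditional expected loss difference equals $(f(x(s))-\hat{x}(s))^2$ before averaging over $s$. Your bias--variance formulation is a slightly cleaner packaging of the same algebra the paper carries out by direct expansion, and your remark on the sign convention is apt (the paper's displayed definition of $L$ has the two terms swapped relative to what the proof and the lemma actually use).
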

\begin{proof}
For every realized vector of signals $s=(s_1,\ldots,s_n)$,
\begin{align*}
&\E_{\omega}[(f(x(s))-\omega)^2 - (\hat{x}(s))-\omega)^2 |s]=\\
& \p(\omega=1|s)[(f(x(s))-1)^2-(\hat{x}(s))-1)^2]+\\
&\p(\omega=0|s)[(f(x_1(s)))^2-(\hat{x}(s))^2]=\\
& \hat{x}(s)[(f(x(s))-1)^2-(\hat{x}(s)-1)^2]+(1-\hat{x}(s))[(f(x(s)))^2-(\hat{x}(s))^2]=
\\
&[(f(x(s))^2-2\hat{x}(s)f(x(s))+(\hat{x}(s))^2]=\\&(f(x(s))-\hat{x}(s))^2.
 \end{align*}
Since the equation holds for every $s=(s_1,\ldots,s_n)$ it holds also in expectation over $s=(s_1,\ldots,s_n)$.
\end{proof}


\section{General Information Structures}\label{sec:general info struct}
 The trivial aggregation scheme, $f(x_1,x_2)=\frac{1}{2}$, ignores the forecasts made by the two experts yet guarantees a regret of $\frac{1}{4}$. Our first observation is that no other aggregation scheme can outperform this. In fact, the following is a slightly stronger result:

\begin{proposition}\label{pro:g}
There exists an information structure $\p$, such that for every aggregation scheme $f,$ it holds that $L(f,\p)\geq \frac{1}{4}$.
\end{proposition}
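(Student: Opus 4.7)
The plan is to combine Lemma \ref{lem:sqdis} with a classical ``XOR-style'' information structure in which each expert's individual signal is uninformative about the state, yet the pair of signals together pins the state down exactly. Since the aggregator only sees the pair of forecasts, he will effectively always see the same input and therefore cannot discriminate between the two possible values of $\hat{x}$.

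Concretely, I would set $\mu = \tfrac12$ and $S_1 = S_2 = \{0,1\}$, with joint distribution
\begin{align*}
\p(\omega=0,s_1=0,s_2=0) &= \p(\omega=0,s_1=1,s_2=1) = \tfrac14, \\
\p(\omega=1,s_1=0,s_2=1) &= \p(\omega=1,s_1=1,s_2=0) = \tfrac14.
\end{align*}
A short check shows each marginal $\p(s_i = 0) = \p(s_i=1) = \tfrac12$, and the conditional $\p(\omega = 1 \mid s_i) = \tfrac12$ for every realization of $s_i$, so $x_1(s_1) = x_2(s_2) = \tfrac12$ identically. On the other hand, $\hat x(s_1,s_2) = \p(\omega=1 \mid s_1,s_2)$ equals $1$ when $s_1 \neq s_2$ and $0$ when $s_1 = s_2$, each occurring with probability $\tfrac12$.

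Now apply Lemma \ref{lem:sqdis}. Writing $y := f(\tfrac12,\tfrac12)$, which is all $f$ ever evaluates on under this $\p$, we obtain
\begin{equation*}
L(f,\p) \;=\; \E_\p\bigl[(f(x_1(s_1),x_2(s_2)) - \hat x(s_1,s_2))^2\bigr] \;=\; \tfrac12 y^2 + \tfrac12 (y-1)^2 \;=\; y^2 - y + \tfrac12.
\end{equation*}
This quadratic is minimized at $y = \tfrac12$ with value $\tfrac14$, so $L(f,\p) \geq \tfrac14$ for every aggregation scheme $f$, completing the argument.

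There is no real obstacle here: the only thing to get right is the verification that the marginals make each expert's forecast identically $\tfrac12$, which is immediate from the symmetric construction, and the final scalar optimization. The whole argument is essentially the observation that $\hat x$ has variance $\tfrac14$ conditional on the (constant) observable $(x_1,x_2)$, so no measurable function of $(x_1,x_2)$ can achieve conditional mean-square error below $\tfrac14$.
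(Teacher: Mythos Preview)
Your proof is correct and uses exactly the same XOR-style construction as the paper. The only cosmetic difference is that you invoke Lemma~\ref{lem:sqdis} and carry out the scalar minimization over $y=f(\tfrac12,\tfrac12)$ explicitly, whereas the paper simply asserts that the aggregator ``has no better action than forecasting $\tfrac12$'' and that the omniscient expert's loss is zero.
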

\begin{proof}
Let $S_i=\{s_i,s'_i\}$ for $i=1,2$ and let $\p$ be the following distribution:
\begin{table}[H]
\centering
\begin{tabular}{cccccccc}
\multirow{4}{*}{} & \multicolumn{3}{c}{$\omega=0$}                                                    &  & \multicolumn{3}{c}{$\omega=1$}                                                    \\
                          &                             & $s_2$                    & $s'_2$                   &  &                             & $s_2$                    & $s'_2$                   \\ \cline{3-4} \cline{7-8}
                          & \multicolumn{1}{c|}{$s_1$}  & \multicolumn{1}{c|}{1/4} & \multicolumn{1}{c|}{0}   &  & \multicolumn{1}{c|}{$s_1$}  & \multicolumn{1}{c|}{0}   & \multicolumn{1}{c|}{1/4} \\ \cline{3-4} \cline{7-8}
                          & \multicolumn{1}{c|}{$s'_1$} & \multicolumn{1}{c|}{0}   & \multicolumn{1}{c|}{1/4} &  & \multicolumn{1}{c|}{$s'_1$} & \multicolumn{1}{c|}{1/4} & \multicolumn{1}{c|}{0}   \\ \cline{3-4} \cline{7-8}
\end{tabular}
\end{table}

It is easy to check that $x_i(s_i)=x_i(s'_i)=\frac{1}{2}$ and $\hat{x}(s_1,s_2)=\hat{x}(s'_1,s'_2)=0$ and $\hat{x}(s_1,s'_2)=\hat{x}(s'_1,s_2)=1$. Namely, each one of the signals separately is uninformative, but together they reveal the state of nature. The ignorant aggregator always observes two forecasts of $\frac{1}{2}$, and has no better action than forecasting $\frac{1}{2}$. On the other hand, the omniscient expert always knows the state (w.p. 1). Therefore, for every aggregation scheme $f$ the relative loss is at least $\frac{1}{4}-0$.
\end{proof}

In Section \ref{sec:aba} we show that the high regret obtained in Proposition \ref{pro:g} holds even when we benchmark the ignorant aggregator against a less challenging expert; the Bayesian aggregator.%
\footnote{The Bayesian aggregator knows the information structure and the experts' forecasts. However, in contrast with the omniscient expert, he does not observe the experts' private signals.}

Thus, ignorantly aggregating forecasts without any restriction on the family of information structures is impossible. What can be done when we consider special classes of information structures? Apparently, for some natural classes of information structures there are aggregation schemes that guarantee a surprisingly low regret.
\section{Blackwell-Ordered Experts}\label{section:blackwell}

An interesting case to analyse in our settings is the scenario where one expert is \emph{more informed} than the other.

\begin{definition}
An information structure $(S_1,S_2,\p)$ is \emph{Blackwell-ordered}, if there exists some set, $S'_2$, such that $S_2= (S_1\times S'_2)$ or, symmetrically, there exists some set, $S'_1$, such that $S_1= (S'_1\times S_2)$.  Let $\BI$ denote the set of all Blackwell-ordered information structures.
\end{definition}

In words, the better-informed expert
has access to the signal available to the less-informed expert and he receives an additional private signal.
This notion is equivalent to the notion of Blackwell domination and Blackwell ordering; see \citep{blackwell}.

Similarly to \eqref{eq:reg}, we define the regret of an aggregation scheme $f$ in Blackwell environment to be
\begin{align*}
R_\mathcal{\BI}(f)=\sup_{\p \in \BI}L(f,\p).
\end{align*}

To gain some intuition about the problem, we study the regret of two simple and naive aggregation schemes.

%

\subsection{Naive aggregation schemes}


\subsubsection*{The DeGroot scheme}

Consider the naive aggregation scheme $f(x_1,x_2)=\frac{1}{2}x_1+\frac{1}{2}x_2$, which coincides with the celebrated DeGroot opinion formation function (see \citep{degroot}). Recall that our criterion of success computes the regent under an adversarial information structure.
Consider the following information structure. Assume that the prior is $\mu=\frac{1}{2}$ and that Expert 1 receives no additional information while Expert 2 learns the realized state $\omega$. The resulting pair of forecasts will be $(\frac{1}{2},0)$ and $(\frac{1}{2},1)$, each with probability $\frac{1}{2}$. The aggregator's forecast, under the naive DeGroot scheme, will be either $\frac{1}{4}$ or $\frac{3}{4}$, each with probability $\frac{1}{2}$. In both cases the forecast will differ by $\frac{1}{4}$ from that of the better expert and hence the regret in the square loss utilities is at least $\frac{1}{16}=0.0625$.%
\footnote{In fact, this information structure leads to the worst-case relative loss and so the regret of the DeGroot scheme is exactly $\frac{1}{16}=0.0625$.}

\subsubsection*{The minimal entropy scheme}

\begin{figure}[h]
\begin{center}
\begin{tikzpicture}[xscale=0.5,yscale=0.5]

\filldraw (0,0) circle (0.2);
\filldraw (0,5) circle (0.2);
\filldraw (10,0) circle (0.2);
\filldraw (10,5) circle (0.2);
\filldraw (3,0) circle (0.2);
\filldraw (2,5) circle (0.2);
\filldraw (7,0) circle (0.2);
\filldraw (8,5) circle (0.2);
\filldraw (5,10) circle (0.2);

\draw[dashed] (0,0) -- (10,0);
\draw[dashed] (0,5) -- (10,5);
\draw[->] (5,10) -- (2.2,5.2);
\draw[->] (5,10) -- (7.8,5.2);

\draw[->] (2,5) -- (0.1,0.3);
\draw[->] (2,5) -- (6.8,0.2);

\draw[->] (8,5) -- (3.2,0.2);
\draw[->] (8,5) -- (9.9,0.3);

\draw(0,-0.1) node[below] {$0$};
\draw(3,-0.1) node[below] {$0.3$};
\draw(7,-0.1) node[below] {$0.7$};
\draw(10,-0.1) node[below] {$1$};

\draw(1.5,5) node[above] {$0.2$};
\draw(8.5,5) node[above] {$0.8$};

\draw(3.5,7.5) node[left] {$0.5$};
\draw(6.5,7.5) node[right] {$0.5$};

\draw(1.7,4) node[left] {$\frac{5}{7}$};
\draw(3.2,4) node[right] {$\frac{2}{7}$};

\draw(6.8,4) node[left] {$\frac{2}{7}$};
\draw(8.3,4) node[right] {$\frac{5}{7}$};

\draw(-0.1,0) node[left] {$X_2$};
\draw(-0.1,5) node[left] {$X_1$};
\end{tikzpicture}
\end{center}
\caption{The martingale $X_1,X_2$.}\label{fig:mart}
\end{figure}
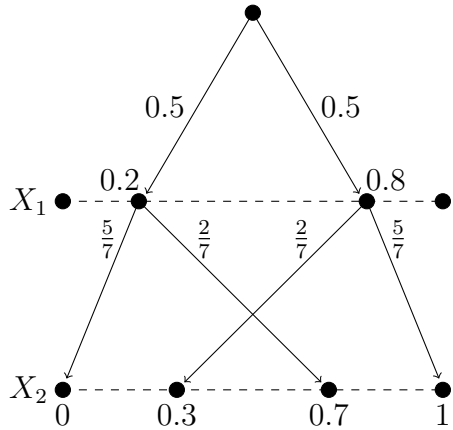

In a Bayesian framework, whenever one of the experts' forecasts is extreme ($x_i \in \{0,1\}$), he is correct with probability 1 and the aggregator should adopt his forecast independently of the information structure. A naive generalization of this is to follow the expert whose forecast is more informative, in terms of entropy. This implies adopting the more extreme forecast. Formally,
\begin{align*}
f(x_1,x_2)=\begin{cases}
x_1 \text{ if } |x_1-\frac{1}{2}|>|x_2-\frac{1}{2}| \\
x_2 \text{ otherwise.}
\end{cases}
\end{align*}
As it turns out, this aggregation scheme does not always perform well. To see this we first note that by the splitting lemma of Aumann and Maschler \citep{AM} there is an identification between Blackwell-ordered information structures and martingales $(X_0,X_1,X_2)$ of posteriors. $X_0=\mu$ is the prior, $X_1$ is the posterior of the less-informed expert, and $X_2$ is the posterior of the more-informed expert. Consider the posterior belief martingale $(X_0,X_1,X_2)$ with expectation $X_0=\frac{1}{2}$ and where $X_1=0.2,0.8$ with equal probabilities. The conditional probabilities for $X_2$ are:
$P(X_2=0|X_1=0.2) = \frac{5}{7}$,
$P(X_2=0.7|X_1=0.2) = \frac{2}{7}$ and, symmetrically,
$P(X_2=1|X_1=0.8) = \frac{5}{7}$,
$P(X_2=0.3|X_1=0.8) = \frac{2}{7}$.
Figure \ref{fig:mart} visualizes this martingale.

In this information structure with probability $\frac{1}{7}$ the ignorant aggregator's observes the pair of forecasts $(0.2,0.7)$. Based on $f$ the ignorant aggregator predicts the more extreme forecast $0.2$ whereas the omniscient expert forecast is $0.7$.
Symmetrically, with probability $\frac{1}{7}$ the ignorant aggregator observes the pair $(0.8,0.3)$ and predicts $0.8,$ which, once again, is $0.5$ away from the forecast of the better-informed expert. Thus, the induced regret is at least $\frac{2}{7}\cdot\frac{1}{4}\approx 0.0714,$ which is even worse than that of the simple average aggregation scheme.

\subsection{Optimal Aggregation}
The analysis of the two naive forecast aggregation schemes and the corresponding information structures suggests that a regret-minimizing aggregation scheme should assign weights to the forecasts  that do depend on their distance from $\frac{1}{2}$ (greater distance translates to more weight) but not too radically.
The formula of the \emph{precision scheme}, which we turn to discuss, follows this intuition. We denote by $\phi(x)=\frac{1}{x(1-x)}$ the \emph{precision} of a forecast $x$.%
\footnote{In statistics, the precision of a random variable is the reciprocal of the variance. The forecast $x$ means that expert belief is that the state is a Bernoulli random variable with probability $x$. Thus, $\phi(x)=\frac{1}{x(1-x)}$ is the precision of the forecast.}
The idea is to assign weights to the two forecasts proportional to their precision. More concretely, we define the precision scheme as follows.
\begin{align*}
f_{pre}(x_1,x_2)=
\begin{cases}
\frac{\phi(x_1)}{\phi(x_1)+\phi(x_2)}x_1 +
\frac{\phi(x_2)}{\phi(x_1)+\phi(x_2)}x_2
&\text{ if }|x_1-x_2| \leq  0.4 \\ \\
\frac{\sqrt{\phi(x_1)}}{\sqrt{\phi(x_1)}+\sqrt{\phi(x_2)}}x_1 +
\frac{\sqrt{\phi(x_2)}}{\sqrt{\phi(x_1)}+\sqrt{\phi(x_2)}}x_2
&\text{ if }|x_1-x_2| >  0.4.
\end{cases}
\end{align*}
In addition, for $x_1,x_2<1$ set $f_{pre}(0,x_2)=f_{pre}(x_1,0)=0$ and for $x_1,x_2>0$ set $f_{pre}(1,x_2)=f_{pre}(x_1,1)=1$. We also set\footnote{Note that the probability that the experts' forecasts are either $(1,0)$ or $(0,1)$, is zero in any information structure. } $f_{pre}(0,1)=f_{pre}(1,0)=\frac{1}{2}.$
Our main result for this section is the following:
\begin{theorem}\label{th:m}
For Blackwell-ordered information structures, the precision scheme guarantees a regret of $\frac{1}{8}(5\sqrt 5-11)\approx 0.0225425$. Moreover, no aggregation scheme guarantees a lower regret. In other words, $R_\BI(f_{pre})= \frac{1}{8}(5\sqrt 5-11) \leq R_\BI(f)$ for any aggregation scheme $f$.
\end{theorem}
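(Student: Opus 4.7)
My plan is to translate both directions of the statement into questions about martingales of posteriors. By the Aumann–Maschler splitting lemma (cited in Section~\ref{section:blackwell}), each $\p\in\BI$ corresponds to a martingale $(X_1,X_2)\in[0,1]^2$ belonging either to $\BI_2=\{(X_1,X_2):\E[X_2\mid X_1]=X_1\}$ (expert~$2$ better informed, so $\hat{x}=X_2$) or to its label-swap $\BI_1$. Lemma~\ref{lem:sqdis} then gives $L(f,\p)=\E[(f(X_1,X_2)-\hat{x})^2]$. Since $f_{pre}$ is symmetric in its arguments and $\BI$ is invariant under label-swap, computing the upper bound reduces to $\sup_{\BI_2}\E[(f_{pre}(X_1,X_2)-X_2)^2]$.

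\textbf{Upper bound on $R_{\BI}(f_{pre})$.} Conditioning on $X_1=x_1$, the inner expected loss is a linear functional of the conditional law of $X_2$, subject only to total mass $1$ and mean $x_1$. A standard extreme-point argument therefore shows the supremum is achieved by a two-point distribution on some $\{y,z\}$ with $y\le x_1\le z$ and weights $\tfrac{z-x_1}{z-y},\tfrac{x_1-y}{z-y}$. The problem reduces to bounding
\[
h(x_1,y,z)=\tfrac{z-x_1}{z-y}\bigl(f_{pre}(x_1,y)-y\bigr)^2+\tfrac{x_1-y}{z-y}\bigl(f_{pre}(x_1,z)-z\bigr)^2
\]
over $0\le y\le x_1\le z\le 1$ and then over $x_1\in[0,\tfrac12]$ (using the $x\mapsto 1-x$ symmetry). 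The piecewise definition of $f_{pre}$ splits the feasible set by whether $|y-x_1|$ and $|z-x_1|$ are $\le 0.4$ or $>0.4$; the threshold $0.4$ is calibrated precisely so that the two regime-wise maxima of $h$ coincide. Differentiating $h$ in $y$ and $z$ on each regime and combining interior critical points with the boundary $|x_i-x_1|=0.4$ reduces to a quadratic whose relevant root yields the value $\tfrac{1}{8}(5\sqrt{5}-11)$.

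\textbf{Matching lower bound.} For any aggregation scheme $f$ I exhibit two explicit structures in $\BI$ that force a conflict at a single observation. Let $a=(3-\sqrt{5})/4$ and $b=1-a=(1+\sqrt{5})/4$, the root of $4a^2-6a+1=0$ in $(0,\tfrac12)$, and set $c:=f(a,b)$. Consider
\begin{itemize}
\item $\p_2\in\BI_2$ with $X_1\equiv a$ and $X_2\in\{0,b\}$, so $\Pr(X_2=b)=a/b$;
\item $\p_1\in\BI_1$ with $X_2\equiv b$ and $X_1\in\{a,1\}$, so $\Pr(X_1=a)=(1-b)/(1-a)=a/b$.
\end{itemize}
Both place probability $a/b$ on the observation $(a,b)$, at which the omniscient forecast is $b$ under $\p_2$ but $a$ under $\p_1$. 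Dropping the remaining non-negative loss contributions, $L(f,\p_2)\ge (a/b)(c-b)^2$ and $L(f,\p_1)\ge(a/b)(c-a)^2$, so
\[
R_{\BI}(f)\ge\min_{c\in[0,1]}\max\bigl\{\tfrac{a}{b}(c-b)^2,\tfrac{a}{b}(c-a)^2\bigr\}=\tfrac{a}{4b}(b-a)^2=\tfrac{1}{8}(5\sqrt{5}-11),
\]
the last equality following from substituting the values of $a$ and $b$. The pair $(a,b)$ is chosen as the maximiser of the right-hand side, which is how the constant $\sqrt{5}$ enters the statement.

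\textbf{Main obstacle.} The principal technical burden is the upper-bound case analysis: verifying that $h(x_1,y,z)\le\tfrac{1}{8}(5\sqrt{5}-11)$ uniformly across the two regimes of $f_{pre}$ and on their common boundary, and that the bound is saturated at some triple, requires a careful tracking of which piece of $f_{pre}$ is active at $y$ versus $z$. Once that is done, the matching lower bound collapses to the short minimax computation above.
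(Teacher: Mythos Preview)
Your overall approach matches the paper's: both directions go through the Aumann--Maschler identification with martingales, Lemma~\ref{lem:sqdis}, and the reduction of the adversary's problem to the extreme points $M_{x_1,y,z}$ (two-point conditional laws for $X_2$). Two points deserve comment.

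\textbf{Lower bound.} Your construction is correct and in fact slightly cleaner than the paper's. The paper uses a \emph{mixed} adversary: a symmetric martingale with prior $\tfrac12$, less-informed posterior in $\{x,1-x\}$, and a coin flip over which expert is better informed; it then argues that the Bayesian aggregator's posterior at the observation $\{x,1-x\}$ is $\tfrac12$. You instead take two \emph{pure} structures with degenerate less-informed experts (priors $a$ and $b$ respectively) and use the definition of $R_{\BI}$ as a supremum directly, obtaining $R_{\BI}(f)\ge\max\{(a/b)(c-a)^2,(a/b)(c-b)^2\}$ and then minimising over $c$. This avoids any appeal to a ``Bayesian aggregator who knows the mixture''; the trade-off is that the paper's symmetric construction also makes clear why $(a,b)$ should be the maximiser.

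\textbf{Upper bound.} Here there is a real gap between what you claim and what you deliver. After the extreme-point reduction, the paper splits into the four regions $K_1,\dots,K_4$ according to whether each of $|y-x_1|,|z-x_1|$ exceeds $0.4$, writes out the explicit loss (rational functions involving $\sqrt{y(1-y)}$, $\sqrt{z(1-z)}$, etc.), and then \emph{solves the four three-dimensional optimisation problems numerically in Matlab}, reporting the global maxima at $(0,\tfrac14(3-\sqrt5),1-\tfrac14(3-\sqrt5))$ and its mirror. Your assertion that ``differentiating $h$ in $y$ and $z$ \dots\ reduces to a quadratic'' is not substantiated, and given the square-root structure of $f_{pre}$ it is not clear such a closed-form reduction exists; the paper certainly does not provide one. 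Likewise, your claim that ``$0.4$ is calibrated precisely so that the two regime-wise maxima of $h$ coincide'' is a heuristic reading of the construction (the appendix indicates the threshold was tuned after simulations), and the actual maximiser lies in the \emph{mixed} region $K_2$, not on the regime boundary. If you intend an analytic proof of the upper bound you must supply the missing case analysis; otherwise, acknowledge---as the paper does---that this step is verified computationally.
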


Note that the interaction can be modelled as a zero-sum game between an ignorant aggregator (who chooses $f$) and an adversary (who chooses $\p$). The proof relies on an explicit formulation of the maxmin strategies of both the adversary and the aggregator in this zero-sum game. Once formulated, the proof is relatively easy, as it is then sufficient to show that the presented strategies guarantee the value $\frac{1}{8}(5\sqrt 5-11)$ for both sides. In Section \ref{sec:proof-dis} we provide some intuition as to how we derived these maxmin strategies.

\begin{proof}

We start by presenting an optimal strategy for the adversary. Namely, we present a distribution over two Blackwell ordered information structures, such that an aggregator who knows the mixed strategy of the adversary cannot achieve a regret below $\frac{1}{8}(5\sqrt 5-11)$. This obviously implies that our ignorant aggregator cannot achieve a better regret either.

We set the prior to $\mu=\frac{1}{2}$. The less-informed expert receives one of two signals that yield posteriors of $x\in (0,\frac{1}{2})$ and $1-x$ with equal probability $\frac{1}{2}$; i.e., the less-informed agent observes a noisy binary signal that is compatible with the correct state with probability $1-x$. Conditional on the posterior $x$, the more-informed expert observes an additional signal that yields posteriors of $0$ and $1-x$ with probabilities $\frac{1-2x}{1-x}$ and $\frac{x}{1-x}$ respectively. Such an information structure exists by the Aumann-Maschler splitting lemma \citep{AM}. Symmetrically, conditional on the posterior $1-x$, the more-informed expert observes an additional signal that yields posteriors of $x$ and $1$ with probabilities $\frac{x}{1-x}$ and $\frac{1-2x}{1-x}$ respectively. Figure \ref{fig:opt-mart} demonstrates the martingale of posteriors for the less- and more-informed experts.

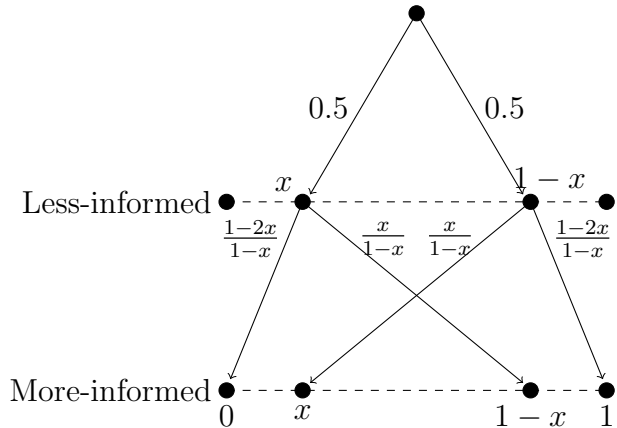
\begin{figure}[h]
\begin{center}
\begin{tikzpicture}[xscale=0.5,yscale=0.5]
\draw[dashed] (0,0) -- (10,0);
\draw[dashed] (0,5) -- (10,5);

\filldraw (0,0) circle (0.2);
\filldraw (0,5) circle (0.2);
\filldraw (10,0) circle (0.2);
\filldraw (10,5) circle (0.2);
\filldraw (2,0) circle (0.2);
\filldraw (2,5) circle (0.2);
\filldraw (8,0) circle (0.2);
\filldraw (8,5) circle (0.2);
\filldraw (5,10) circle (0.2);

\draw[->] (5,10) -- (2.2,5.2);
\draw[->] (5,10) -- (7.8,5.2);

\draw[->] (2,5) -- (0.1,0.3);
\draw[->] (2,5) -- (7.8,0.2);

\draw[->] (8,5) -- (2.2,0.2);
\draw[->] (8,5) -- (9.9,0.3);

\draw(0,-0.1) node[below] {$0$};
\draw(2,-0.1) node[below] {$x$};
\draw(8,-0.1) node[below] {$1-x$};
\draw(10,-0.1) node[below] {$1$};

\draw(1.5,5) node[above] {$x$};
\draw(8.5,5) node[above] {$1-x$};

\draw(3.5,7.5) node[left] {$0.5$};
\draw(6.5,7.5) node[right] {$0.5$};

\draw(1.7,4) node[left] {$\frac{1-2x}{1-x}$};
\draw(3.2,4) node[right] {$\frac{x}{1-x}$};

\draw(6.8,4) node[left] {$\frac{x}{1-x}$};
\draw(8.3,4) node[right] {$\frac{1-2x}{1-x}$};

\draw(-0.1,0) node[left] {More-informed};
\draw(-0.1,5) node[left] {Less-informed};
\end{tikzpicture}
\end{center}
\caption{The martingale of posteriors.}\label{fig:opt-mart}
\end{figure}

Now consider the mixed strategy where the more informed expert is chosen to be Expert 1 or Expert 2 with equal probability $\frac{1}{2}$. In the case where the ignorant aggregator observes the pair of forecasts $\{x,1-x\}$, which occurs with probability $\frac{x}{1-x}$, he does not know who the better-informed expert is. In fact he assigns equal probability $\frac{1}{2}$ to the event that Expert $i=1,2$ is the more-informed expert. Therefore, his optimal prediction in such a case is $\frac{1}{2}$, which is $(\frac{1}{2}-x)$-far from the omniscient expert forecast. Thus, the relative loss of any aggregation scheme against this mixed strategy is at least $\frac{x}{1-x}(\frac{1}{2}-x)^2$. Maximizing over $x\in (0,\frac{1}{2})$ yields a regret of $\frac{1}{8}(5\sqrt 5-11)$, which is obtained for $x=\frac{1}{4}(3-\sqrt{5})$.

We now prove that the average prior scheme guarantees a regret of at most $\frac{1}{8}(5\sqrt 5-11)$. Note that the precision scheme is anonymous, namely, $f_{pre}(x_1,x_2)=f_{pre}(x_2,x_1)$. Therefore, the adversary's best reply against the precision scheme contains an information structure where Expert 2 is the more-informed expert. Henceforth, we restrict attention to such information structures. Now, the adversary's strategy can be viewed as a martingale $X_0,X_1,X_2$ of length 2,  where $X_0=\mu$ is the prior, and $X_i$ is the posterior of Expert $i$. By Lemma \ref{lem:sqdis} the relative loss is given by $L(f_{pre},(X_0,X_1,X_2))=\E_{x_i \sim X_i} [(f_{pre}(x_1,x_2)-x_2)^2]$.

\begin{figure}[h]
\begin{center}
\begin{tikzpicture}[xscale=0.6,yscale=0.6]
\draw[dashed] (0,0) -- (10,0);
\draw[dashed] (0,5) -- (10,5);

\filldraw (0,0) circle (0.2);
\filldraw (10,0) circle (0.2);
\filldraw (5,5) circle (0.2);
\filldraw (2,0) circle (0.2);
\filldraw (7,0) circle (0.2);


\draw[->] (5,5) -- (2.1,0.2);
\draw[->] (5,5) -- (6.9,0.2);




\draw(0,-0.1) node[below] {$0$};
\draw(2,-0.1) node[below] {$x$};
\draw(7,-0.1) node[below] {$z$};
\draw(10,-0.1) node[below] {$1$};

\draw(5,5.1) node[above] {$y$};

\draw(3.5,2.5) node[left] {$\frac{z-y}{z-x}$};
\draw(6,2.5) node[right] {$\frac{y-x}{z-x}$};

%

\end{tikzpicture}
\end{center}
\caption{$M_{x,y,z}$, the extreme points of the class of martingales.}\label{fig:xyz}
\end{figure}
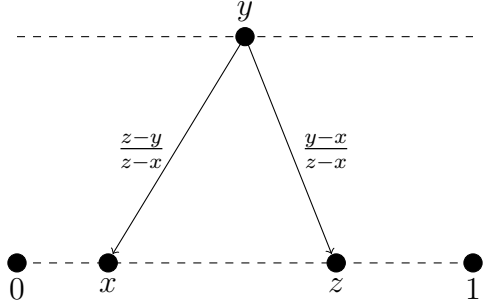

For $x<y<z$ we denote by $M_{x,y,z}$ the martingale where $X_0=X_1=y$ with probability $1$, $X_2=x$ with probability
$\frac{z-y}{z-x}$, and $X_2=z$ with probability $\frac{y-x}{z-x}$, see Figure \ref{fig:xyz}. Note that the set of martingales of length 2 is a convex set whose extreme points are exactly $\{M_{x,y,z}\}_{0\leq x \leq y \leq z\leq 1}$. Moreover, note that the adversary's utility is linear in the representation of the martingale. Namely, for a martingale that is given by the convex combination $M=\sum_{(x,y,z)\in W} \alpha_{x,y,z} M_{x,y,z}$ we have
$$L(M,f)=\sum_{(x,y,z)\in W} \alpha_{x,y,z} L(M_{x,y,z},f).$$
Therefore, given the aggregation scheme $f_{pre}$, the adversary has a best reply to $f_{pre}$ of the form $M_{x,y,z}$. From this we deduce that
$$ R_\BI(f_{pre})=\sup_{(x,y,z)\in[0,1]^3:x\leq y\leq z} L(M_{x,y,z},f_{pre}).$$
Consider the following four compact ranges in $[0,1]^3$:
\begin{eqnarray*}
&K_1=\{(x,y,z)|x\leq y\leq z,\ |x-y|\geq 0.4,\ |z-y|\geq 0.4\}\\ &K_2=\{(x,y,z)|x\leq y\leq z,\ |x-y|\leq 0.4,\ |z-y|\geq 0.4\}\\  &K_3=\{(x,y,z)|x\leq y\leq z,\ |x-y|\geq 0.4,\ |z-y|\leq 0.4\}\\
&K_4=\{(x,y,z)|x\leq y\leq z,\ |x-y|\leq 0.4,\ |z-y|\leq 0.4\}.
\end{eqnarray*}
We note that for every triplet $(x,y,z)$ where $x\leq y\leq z,$ there exists $1\leq m\leq 4$ such that $(x,y,z)\in K_m$ and the relative loss of $f$ on each of the $K_m$ is determined by a fixed loss function. For example, for $(x,y,z)$ in $K_1$ we have
\begin{align*}
&\E_{\p_{x,y,z}}[l(x_2,\omega)-l(f_{pre}(x_1,x_2),\omega)]=\\
&\frac{z-y}{z-x}\left( \frac{\sqrt{x(1-x)}y+\sqrt{y(1-y)}x}{\sqrt{x(1-x)}+\sqrt{y(1-y)}}-x\right)^2+\frac{y-x}{z-x}\left( \frac{\sqrt{y(1-y)}z+\sqrt{z(1-z)}y}{\sqrt{y(1-y)}+\sqrt{z(1-z)}}-z\right)^2.
\end{align*}
Similarly, in range $K_2$ we have
\begin{align*}
&\E_{\p_{x,y,z}}[l(x_2,\omega)-l(f_{pre}(x_1,x_2),\omega)]=\\
&\frac{z-y}{z-x}\left( \frac{x(1-x)y+y(1-y)x}{x(1-x)+y(1-y)}-x\right)^2+\frac{y-x}{z-x}\left( \frac{\sqrt{y(1-y)}z+\sqrt{z(1-z)}y}{\sqrt{y(1-y)}+\sqrt{z(1-z)}}-z\right)^2.
\end{align*}
Similar expressions can be obtained for ranges $K_4$ and $K_3$.

Thus, in order to show that $f_{pre}$ guarantees a regret of at most $\frac{1}{8}(5\sqrt{5}-11)$ to the ignorant aggregator, one need only solve four three-dimensional optimization problems in the four compact domains $\{K_i\}_{i=1,2,3,4}.$ These optimization problems has been solved by Matlab,
%
%
which shows that the global maximum of $R(x,y,z)$ is obtained at two points $(0,\frac{1}{4}(3-\sqrt{5}),1-\frac{1}{4}(3-\sqrt{5}))$ and $(\frac{1}{4}(3-\sqrt{5}),1-\frac{1}{4}(3-\sqrt{5}),1)$ and is equal to $\frac{1}{8}(5\sqrt{5}-11)$.
\end{proof}

\section{Two Conditionally Independent Experts}\label{section:ciis}

Another family of information structures that is prevalent in the economics literature is that of experts who receive independent signals, conditional on the realized state. We refer to this as a conditionally independent information structure.
In such settings, knowing the prior $\mu=\p(\omega=1)$ together with the posteriors $x_1,x_2,\ldots,x_n$ is a sufficient statistic for the omniscient expert. It is straightforward to verify the following (see, e.g., \citep{Bordley}): if  $\p(\omega=1|s_i)=x_i$ for $1\leq i\leq n,$ then
\begin{align*}
\p(\omega=1| s_1,\ldots,s_n)=g(\mu,x_1,\ldots,x_n)=\frac{(1-\mu)^{n-1}\prod_{i=1}^n x_i }{(1-\mu)^{n-1}\prod_{i=1}^n x_i  +\mu^{n-1} \prod_{i=1}^n (1-x_i)}.
\end{align*}

The ignorant aggregator, however, does not know the prior. Nevertheless, this observation induces a natural family of aggregation schemes, where Bayes rule is applied to a ``dummy'' prior (or a guess). The guess of the prior can be based on the experts' forecasts. It turns out that the resulting regret is surprisingly low when the number of experts is $2$ and the aggregator uses the standard average of the two forecasts as his guess for the prior. This guess entails the following aggregation scheme, which we refer to as the \emph{average-prior} scheme:
\begin{align*}
f_{avg}(x_1,x_2)=g(\frac{x_1+x_2}{2},x_1,x_2) =\frac{x_1 x_2 (1-\frac{x_1+x_2}{2})}{x_1 x_2 (1-\frac{x_1+x_2}{2}) + (1-x_1)(1-x_2)\frac{x_1+x_2}{2}}
\end{align*}
Let $\CI$ be the class of all independent information structures for two experts.
\begin{theorem}\label{th:iid}
$R_\CI(f_{avg})=0.0260$, that is, the average prior scheme guarantees a regret of $0.0260$. Moreover, for every aggregation scheme $f$ it holds that $R_\CI(f)\geq \frac{1}{8}(5\sqrt 5-11)\approx 0.0225$. That is, no aggregation scheme guarantees a regret lower than $\frac{1}{8}(5\sqrt 5-11)\approx 0.0225$.
\end{theorem}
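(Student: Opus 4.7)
This statement combines an upper bound $R_\CI(f_{avg})\le 0.0260$ and a universal lower bound $R_\CI(f)\ge\frac{1}{8}(5\sqrt{5}-11)$; the two parts require separate arguments. For the upper bound I would first apply Lemma~\ref{lem:sqdis} together with the closed-form Bayesian aggregator $g(\mu,x_1,x_2)$ from Section~\ref{section:ciis} to rewrite $L(f_{avg},\p)=\E_\p[(f_{avg}(x_1,x_2)-g(\mu,x_1,x_2))^2]$. Any CI structure with prior $\mu$ is pinned down by the marginal posterior distributions $\pi_1,\pi_2$ on $[0,1]$, each with mean $\mu$, since conditional independence forces the joint via $P(v_1,v_2\mid\omega)=P(v_1\mid\omega)P(v_2\mid\omega)$. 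Because $L$ is bilinear in $(\pi_1,\pi_2)$ and the extreme points of $\{\pi:\E_\pi[v]=\mu\}$ are distributions supported on at most two points, one may restrict attention to binary-signal CI structures parameterized by at most five scalars: $\mu$ and two posterior values per expert. The parameter space partitions into finitely many regions according to the ordering of the four possible forecast pairs, on each of which $f_{avg}$ admits a closed-form algebraic expression, and I would then maximize the resulting constrained objective numerically (as in the proof of Theorem~\ref{th:m}) to obtain the value $0.0260$, attained at the corresponding optimizer.

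For the lower bound, I would exhibit a mixed strategy for the adversary over CI information structures such that no deterministic $f$ can achieve expected relative loss below $\frac{1}{8}(5\sqrt{5}-11)$. Guided by the Blackwell construction in the proof of Theorem~\ref{th:m}, I would look for a symmetric mixture of two asymmetric CI structures in which the identity of the more informative expert is randomized with equal probability, engineered so that with non-trivial probability the aggregator observes a forecast pair of the form $\{x,1-x\}$ that is consistent with both structures and whose two possible omniscient forecasts lie symmetrically about $\frac{1}{2}$. On that event the Bayes-optimal response of any $f$ is the trivial prediction $\frac{1}{2}$, contributing a per-event quadratic error of $(\frac{1}{2}-x)^2$; optimizing over the free parameters should then recover $x=\frac{1}{4}(3-\sqrt{5})$ and match the Blackwell value $\frac{x}{1-x}(\frac{1}{2}-x)^2=\frac{1}{8}(5\sqrt{5}-11)$.

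The main obstacle is the lower bound. Conditional independence is strictly more restrictive than Blackwell-ordering, and the hard case from Theorem~\ref{th:m} is itself \emph{not} a CI structure, so the adversary cannot reuse that construction verbatim. Engineering two genuinely CI structures whose forecast distributions overlap on an ambiguous event, while keeping the two possible omniscient forecasts symmetric about $\frac{1}{2}$ with the prescribed probability mass, is the delicate combinatorial step; the same restrictiveness is what leaves the unresolved gap of $0.0260-\frac{1}{8}(5\sqrt{5}-11)\approx 0.0035$ flagged by the paper and discussed in Appendix~\ref{app:ci}. For the upper bound, the main conceptual step is the bilinearity-plus-extreme-points reduction to binary-signal structures, after which the numerical optimization is routine.
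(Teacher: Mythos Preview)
Your upper-bound plan matches the paper's argument: reduce via bilinearity in the two posterior distributions to binary-signal structures, obtain a closed-form objective in the five parameters $(\mu,y_1,z_1,y_2,z_2)$, and maximize numerically to get $0.0260$. One minor point: since $f_{avg}$ is given by a single algebraic formula (no case split like the $0.4$ threshold in $f_{pre}$), there is no need to partition the parameter space into regions; the objective is a single rational function on the whole simplex.

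The lower bound, however, has a genuine gap. Your template --- randomize which expert is more informative and look for an ambiguous pair $\{x,1-x\}$ --- is imported from the Blackwell proof, but it cannot work as stated in the CI setting. The reason is that the omniscient forecast $g(\mu,x_1,x_2)$ is \emph{symmetric} in $(x_1,x_2)$, so two CI structures that differ only by swapping the experts' roles (with the same prior) produce the \emph{same} omniscient forecast on the pair $\{x,1-x\}$; there is no ambiguity to exploit. (Relatedly, CI is not ``strictly more restrictive than Blackwell-ordering'': neither class contains the other.)

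The paper's construction is different and simpler than what you sketch. The adversary randomizes not the expert identities but the \emph{prior}, uniformly over $\{x,1-x\}$, and uses conditionally \emph{i.i.d.}\ signals: under prior $x$ each expert's posterior is $0$ or $\tfrac12$ (with probabilities $1-2x$ and $2x$), and under prior $1-x$ each posterior is $1$ or $\tfrac12$. The ambiguous event is not $\{x,1-x\}$ but the pair $(\tfrac12,\tfrac12)$: on that event the aggregator cannot tell which prior was drawn, while the omniscient forecast equals $g(x,\tfrac12,\tfrac12)=1-x$ or $g(1-x,\tfrac12,\tfrac12)=x$. The event has probability $\frac{x}{1-x}$, so the relative loss is $\frac{x}{1-x}(\tfrac12-x)^2$, and maximizing over $x$ recovers $\frac{1}{8}(5\sqrt5-11)$ at $x=\tfrac14(3-\sqrt5)$. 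This is the missing idea: in the CI world the adversary's leverage is the unknown prior, not the unknown ordering of the experts.
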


\begin{proof}
We begin by  introducing a strategy for the adversary that guarantees him a regret of at least $\frac{1}{8}(5\sqrt 5-11)$. Namely, we present a distribution over two conditionally independent information structures, such that an aggregator who knows the mixed strategy of the adversary cannot achieve a regret below $\frac{1}{8}(5\sqrt 5-11)$.%
\footnote{In fact the distribution is over conditionally \emph{i.i.d.} information structures.}
This obviously implies that our ignorant aggregator cannot achieve a better regret, either.

\begin{figure}[h]
\begin{center}
\begin{tikzpicture}[xscale=0.5,yscale=0.5]
\draw[dashed] (0,0) -- (10,0);
\draw[dashed] (0,5) -- (10,5);

\filldraw (0,0) circle (0.2);
\filldraw (0,5) circle (0.2);
\filldraw (10,0) circle (0.2);
\filldraw (10,5) circle (0.2);
\filldraw (5,0) circle (0.2);
\filldraw (2,5) circle (0.2);
\filldraw (8,5) circle (0.2);
\draw (5,10) circle (0.2);

\draw[->] (5,10) -- (2.2,5.2);
\draw[->] (5,10) -- (7.8,5.2);

\draw[->] (2,5) -- (0.1,0.3);
\draw[->] (2,5) -- (4.8,0.2);

\draw[->] (8,5) -- (5.2,0.2);
\draw[->] (8,5) -- (9.9,0.3);

\draw(0,-0.1) node[below] {$0$};
\draw(5,-0.1) node[below] {$1/2$};
\draw(10,-0.1) node[below] {$1$};

\draw(1.5,5) node[above] {$x$};
\draw(8.5,5) node[above] {$1-x$};

\draw(3.5,7.5) node[left] {$0.5$};
\draw(6.5,7.5) node[right] {$0.5$};

\draw(1.3,3) node[left] {$1-2x$};
\draw(3.2,3) node[right] {$2x$};

\draw(6.8,3) node[left] {$2x$};
\draw(8.7,3) node[right] {$1-2x$};

\draw(-0.1,0) node[left] {Posterior of single expert};
\draw(-0.1,5) node[left] {Prior};
\end{tikzpicture}
\end{center}
\caption{Distribution over priors and posteriors of a single agent.}\label{fig:iid}
\end{figure}
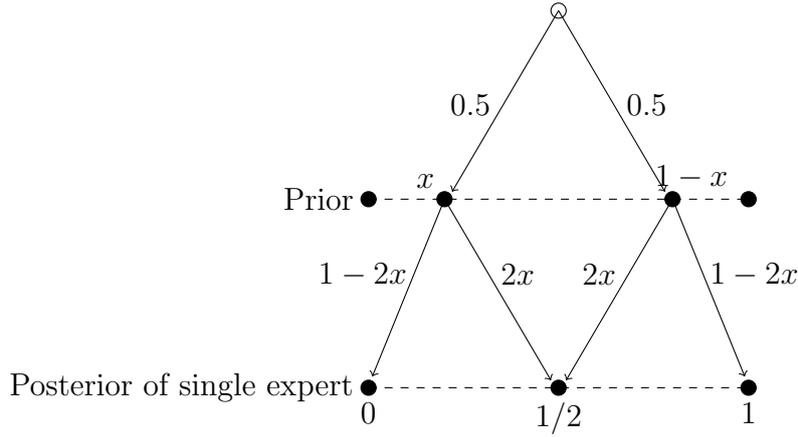

Consider the mixed strategy that randomizes uniformly over the following two conditionally independent structures. In the first information structure the prior is $x\in (0,\frac{1}{2})$. The signals are conditionally identically distributed signals that induce their posterior belief $0$ and $\frac{1}{2}$ with probabilities $1-2x$ and $2x$ respectively (by Aumann and Maschler's splitting lemma \citep{AM} such a signal structure exists). In the second information structure the prior is $1-x\in (\frac{1}{2},1)$. The posterior beliefs are $1$ and $\frac{1}{2}$ with probabilities $1-2x$ and $2x$ respectively. A visualization of the adversary's mixed strategy appears in Figure \ref{fig:iid}.

When the realizations of the two forecasts turn out to be $x_1=x_2=\frac{1}{2}$, an ignorant aggregator does not know whether the prior was $x$ or $1-x$, while a omniscient expert does. Simple symmetry arguments show that the best forecast for a ignorant aggregator in such an event is $\frac{1}{2}$, whereas a omniscient expert forecasts $$\frac{\frac{1}{2}\frac{1}{2}(1-x)}{\frac{1}{2}\frac{1}{2}(1-x)+\frac{1}{2}\frac{1}{2}x} = 1-x \text{ and } \frac{\frac{1}{2}\frac{1}{2}x}{\frac{1}{2}\frac{1}{2}x+\frac{1}{2}\frac{1}{2}(1-x)} = x,$$ depending on whether the prior was $x$ or $1-x$ respectively. By Lemma \ref{lem:sqdis} the relative loss in the event of $x_1=x_2=\frac{1}{2}$ is $(\frac{1}{2}-x)^2$.
A simple calculation shows that the probability of the event $x_1=x_2=\frac{1}{2}$ is $\frac{x}{1-x}$. Therefore, the relative loss of an aggregator who knows the adversary's strategy is
$$\frac{x}{1-x}(\frac{1}{2}-x)^2.$$
 Maximizing the relative loss over $x\in (0,\frac{1}{2})$ yields a regret $\frac{1}{8}(5\sqrt 5-11)$, which is obtained at $x=\frac{1}{4}(3-\sqrt{5})$.
This proves that no aggregation scheme can guarantee a regret below $\frac{1}{8}(5\sqrt 5-11)$.

Next we shall prove that any strategy of the adversary against the average prior scheme yields a relative loss of at most $0.0260$ to the aggregator. An adversary can be viewed as a triple $(\mu,\p^1,\p^2)$ where $\mu$ is the prior, and $\p^i \in \Delta([0,1])$ is a distribution of Expert $i$'s posterior beliefs with expectation $\E(\p_i)=\mu$. Assume that $\p_i$ assigns a prior probability of $p_i$ to the posterior $x_i$. Simple calculations show that the probability of the pair of posteriors being $(x_1,x_2)$ is given by the following  expression that is multilinear in $p_1$ and $p_2$:
$$h(p_1,p_2,\mu,x_1,x_2)= p_1 p_2 \left( \frac{(1-x_1)(1-x_2)}{1-\mu}+\frac{x_1 x_2}{\mu} \right).$$
By Lemma \ref{lem:sqdis} the relative loss of the aggregator in the case where $(x_1,x_2)$ is the realized posterior probability is
\begin{align}\label{eq:r}
\begin{split}
r(\mu,x_1,x_2)=
&\left( \frac{x_1 x_2 (1-\mu)}{x_1 x_2 (1-\mu) + (1-x_1)(1-x_2)\mu} \right.\\
&-
\left.\frac{x_1 x_2 (1-\frac{x_1+x_2}{2})}{x_1 x_2 (1-\frac{x_1+x_2}{2}) + (1-x_1)(1-x_2)\frac{x_1+x_2}{2}} \right)^2.
\end{split}
\end{align}
For every $y\leq \mu \leq z$ let $\p^i_{\mu,y,z}$ be a posterior distribution with support $\{y,z\}$ such that $x$ is realized with probability $\frac{z-\mu}{z-y}$ and $y$ is realized with probability $\frac{\mu-y}{z-y}$. These are the extreme points of the convex set of all posterior distributions. Namely,
any distribution $\p^i$ can be written as $\p^i=\sum_j \alpha^i_j \p^i_{\mu,y^i_j,z^i_j}$, where $\sum_j \alpha^1_j = \sum_j \alpha^2_j = 1$. The multilinearity of the relative loss implies the following formula on the expected relative loss:
\begin{align}\label{eq:lina}
L(f_{avg},(\mu,\p^1,\p^2))=\sum_{j,k} \alpha^1_j \alpha^2_k L((\mu,\p^1_{\mu,y^1_j,z^1_j},\p^2_{\mu,y^2_k,z^2_k}),f_{avg})
\end{align}
which is a convex combination of relative losses of the form $L(f_{avg},(\mu,\p^1_{\mu,y^1,z^1},\p^2_{\mu,y^2,z^2}))$. Therefore, the regret that an adversary can achieve against the average aggregation scheme is given by
$$\max_{\mu,y_1 \leq \mu \leq z_1, y_2 \leq \mu \leq z_2}  L((\mu,\p^1_{\mu,y^1,z^1},\p^2_{\mu,y^2,z^2}),f_{avg}).$$
Note also that the objective function has a closed formula:
\begin{align}\label{eq:5dObj}
\begin{split}
L((\mu,\p^1_{\mu,y^1,z^1},\p^2_{\mu,y^2,z^2}),f_{avg}) =&
h(\frac{z_1-\mu}{z_1-y_1},\frac{z_2-\mu}{z_2-y_2},\mu,y_1,y_2) \cdot r(\mu,y_1,y_2) + \\
& h(\frac{z_1-\mu}{z_1-y_1},\frac{\mu-y_2}{z_2-y_2},\mu,y_1,z_2) \cdot r(\mu,y_1,z_2) + \\
& h(\frac{\mu-y_1}{z_1-y_1},\frac{z_2-\mu}{z_2-y_2},\mu,z_1,y_2) \cdot r(\mu,z_1,y_2) + \\
& h(\frac{\mu-y_1}{z_1-y_1},\frac{\mu-y_2}{z_2-y_2},\mu,z_1,z_2) \cdot r(\mu,z_1,z_2).
\end{split}
\end{align}
In summary, we have shown that, given the aggregation scheme $f$, the adversary's best-reply problem can be reduced to a concrete maximization problem over five parameters $\mu,y_1,y_2,z_1,z_2$.
Matlab calculations show that the global maximum of this five-variable fraction is obtained at the point $\mu=0.120$, $y_1=0.120$, $z_1=0.120$, $y_2=0$, $z_2=0.746$ and is equal to  $0.0260$.%
\footnote{Note that the adversarial best-reply corresponds to an information structure where the first expert receives no information.}
\end{proof}

\subsection{Mind the gap}

Theorem \ref{th:iid} leaves a gap between the upper and lower regret bounds whenever the information structure is conditionally independent. We are not able to close this gap; however, we can slightly improve the upper bound by using a non-intuitive variant of the average prior scheme. That is, instead of updating the two posterior with respect to their average we determine the ``dummy'' prior as follows:
\begin{eqnarray}
ep(x_1,x_2)=
\begin{cases}
0.49 x_1 + 0.49 x_2 &\text{ if } x_1+x_2 \leq 1 \\
0.49 x_1 + 0.49 x_2 + 0.02 &\text{ otherwise.}
\end{cases}
\end{eqnarray}

\begin{proposition}\label{pro:iid}
For conditionally independent information structures, the aggregation scheme, $f(x_1,x_2)=g(ep(x_1,x_2),x_1,x_2)$, guarantees a regret of $0.0250$.
\end{proposition}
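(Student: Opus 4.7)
The plan is to mirror the upper-bound portion of the proof of Theorem \ref{th:iid}, since the only change is the replacement of the dummy prior $\frac{x_1+x_2}{2}$ by the piecewise linear function $ep(x_1,x_2)$. Define the modified pointwise relative loss
\begin{align*}
r'(\mu,x_1,x_2) = \left( \frac{x_1 x_2(1-\mu)}{x_1 x_2(1-\mu)+(1-x_1)(1-x_2)\mu} - \frac{x_1 x_2(1-ep(x_1,x_2))}{x_1 x_2(1-ep(x_1,x_2))+(1-x_1)(1-x_2)ep(x_1,x_2)} \right)^2,
\end{align*}
which replaces $r(\mu,x_1,x_2)$ from \eqref{eq:r}. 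The multilinearity argument leading to \eqref{eq:lina} depends only on the structure of the joint probability $h(p_1,p_2,\mu,x_1,x_2)$ and is therefore unaffected by the choice of aggregation scheme. Consequently the adversary's best reply is again supported on two posteriors per expert, and the regret equals the supremum of the five-variable objective \eqref{eq:5dObj} (with $r$ replaced by $r'$) over $\{(\mu,y_1,z_1,y_2,z_2): 0\leq y_i\leq \mu\leq z_i\leq 1\}$.

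The main additional complication is the piecewise nature of $ep$. The four terms of \eqref{eq:5dObj} evaluate $r'$ at $(y_1,y_2)$, $(y_1,z_2)$, $(z_1,y_2)$, and $(z_1,z_2)$, each of which can lie on either side of the line $x_1+x_2=1$. Partitioning the feasible region by the signs of the four linear forms $y_1+y_2-1$, $y_1+z_2-1$, $z_1+y_2-1$, $z_1+z_2-1$ produces at most sixteen compact sub-domains on each of which the objective is a fixed rational function in five variables; it then suffices to verify that the maximum on each sub-domain is at most $0.0250$.

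The hard step is the final numerical maximization, since the objective is a non-convex sum of rational functions and closed-form optimization appears intractable. Following the approach of Theorem \ref{th:iid}, one would perform a global optimization in Matlab on each of the sixteen sub-domains, combining a fine grid search with local gradient refinement, and then verify first-order optimality at the best candidate maximizer. Because the asymmetric ``$+0.02$'' correction in $ep$ breaks the symmetry about the line $x_1+x_2=1$ that the adversary exploited in Theorem \ref{th:iid}, one expects the adversary's optimum to shift away from the near-symmetric configuration $(\mu,y_1,z_1,y_2,z_2)=(0.120,0.120,0.120,0,0.746)$ and the worst-case relative loss to decrease from $0.0260$ to the claimed bound of $0.0250$.
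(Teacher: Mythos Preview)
Your approach is essentially the paper's: it too notes that the only change from the proof of Theorem \ref{th:iid} is replacing $r$ by $r'(\mu,x_1,x_2)=(g(\mu,x_1,x_2)-g(ep(x_1,x_2),x_1,x_2))^2$, and then hands the resulting five-variable objective \eqref{eq:5dObj} to Matlab, which returns a global maximum of $0.0250$ at $(\mu,y_1,z_1,y_2,z_2)=(0.114,0.114,0.114,0,0.744)$. Your explicit partition into sub-domains according to the sign of each $x_1+x_2-1$ is a sensible refinement the paper leaves implicit.

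One small correction to your closing intuition: the ``$+0.02$'' does \emph{not} break the reflection symmetry about $x_1+x_2=1$, since one checks directly that $ep(1-x_1,1-x_2)=1-ep(x_1,x_2)$; and indeed the paper's reported maximizer is structurally the same as the old one (first expert uninformative, second expert's posteriors at $0$ and $\approx 0.74$), just with slightly perturbed parameters---so the drop from $0.0260$ to $0.0250$ is not driven by the mechanism you suggest.
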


The proof of Proposition \ref{pro:iid}, which bares a similarity to the proof of the first part of Theorem \ref{th:iid}, is relegated to appendix \ref{sec:abag-proof}.

In many economic models, signals, in addition to being conditionally independent, are also {\em identical}. For this case we conjecture that the regret bound of  $\frac{1}{8}(5\sqrt 5-11)\approx 0.0225$  is tight and that the average prior scheme is indeed optimal.

\begin{conjecture}\label{con:iid}
For conditionally i.i.d. information structures, the minimal regret that can be guaranteed is equal to $\frac{1}{8}(5\sqrt 5-11)\approx 0.0225$, and the average prior scheme guarantees this regret.
\end{conjecture}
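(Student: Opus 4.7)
The plan proceeds in two parts, matching lower and upper bounds on the regret value $\tfrac{1}{8}(5\sqrt 5 - 11)$.

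\textbf{Lower bound.} This follows almost directly from the first half of the proof of Theorem \ref{th:iid}. The adversarial construction there uses a 50/50 mixture of the two \emph{conditionally i.i.d.} information structures $(\mu = x,\ \p = (1-2x)\delta_0 + 2x\delta_{1/2})$ and $(\mu = 1-x,\ \p = (1-2x)\delta_1 + 2x\delta_{1/2})$ with $x = \tfrac{1}{4}(3-\sqrt 5)$; the minimax argument shows that no aggregator can guarantee regret below $\tfrac{1}{8}(5\sqrt 5 - 11)$ even against this restricted, i.i.d.\ adversary, proving the lower bound for every scheme. Moreover, a direct computation on the first component alone verifies that $L(f_{avg}, (x, \p, \p)) = \tfrac{x}{1-x}(\tfrac{1}{2}-x)^2$: the only nonzero contribution comes from the event $x_1 = x_2 = \tfrac{1}{2}$ (of total weight $\tfrac{x}{1-x}$), where $f_{avg}(\tfrac{1}{2},\tfrac{1}{2}) = g(\tfrac{1}{2},\tfrac{1}{2},\tfrac{1}{2}) = \tfrac{1}{2}$ while the omniscient posterior is $g(x,\tfrac{1}{2},\tfrac{1}{2}) = 1-x$. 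At $x = \tfrac{1}{4}(3-\sqrt 5)$ this evaluates to exactly $\tfrac{1}{8}(5\sqrt 5 - 11)$, so $R_{\CI}(f_{avg})$ restricted to i.i.d.\ structures is bounded below by this value.

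\textbf{Upper bound.} The task is to show $L(f_{avg}, (\mu, \p, \p)) \leq \tfrac{1}{8}(5\sqrt 5 - 11)$ for every conditionally i.i.d.\ structure. By Lemma \ref{lem:sqdis},
\[
L(f_{avg}, (\mu, \p, \p)) \;=\; \sum_{x_1, x_2} \p(x_1)\,\p(x_2)\, h(\mu, x_1, x_2)\bigl(f_{avg}(x_1, x_2) - g(\mu, x_1, x_2)\bigr)^2 \;=\; Q_\mu(\p, \p),
\]
a symmetric bilinear functional evaluated at $(\p,\p)$, so for fixed $\mu$ this is a quadratic form in $\p$ on the constrained simplex $\{\p : \E_\p[x] = \mu\}$. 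The plan is: (i) show that the maximum of $Q_\mu(\p,\p)$ over this constrained simplex is attained at a two-point distribution $\p = \tfrac{z-\mu}{z-y}\delta_y + \tfrac{\mu-y}{z-y}\delta_z$, by establishing convexity of the quadratic form on the mean-preserving hyperplane (compute the Hessian of $Q_\mu$ with respect to the weights $\p(x)$, restricted to the tangent subspace where both $\sum \p(x) = 1$ and $\sum x\,\p(x) = \mu$ are preserved, and verify positive semidefiniteness); (ii) reduce to a three-parameter optimization over $(\mu, y, z) \in [0,1]^3$ with $y \leq \mu \leq z$, obtaining a closed-form objective analogous to equation \eqref{eq:5dObj} with the i.i.d.\ constraint baked in; (iii) verify, symbolically or numerically via Matlab in the manner of Theorems \ref{th:m} and \ref{th:iid}, that the global maximum of this expression equals exactly $\tfrac{1}{8}(5\sqrt 5-11)$, attained at $(\mu, y, z) = (\tfrac{1}{4}(3-\sqrt 5), 0, \tfrac{1}{2})$ and its mirror $(\tfrac{1}{4}(1+\sqrt 5), \tfrac{1}{2}, 1)$.

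\textbf{Principal obstacle.} Step (i) is the crux and is substantially harder than the corresponding step in the proofs of Theorems \ref{th:m} and \ref{th:iid}. There the adversary's strategy space was multilinear in her choices (a martingale of length two in the Blackwell case, a product of two independent posterior distributions in the conditionally independent case), so the extreme-point reduction was essentially free. The i.i.d.\ constraint, by contrast, couples the two experts' distributions diagonally and makes the loss genuinely quadratic in $\p$; reduction to binary support therefore requires an honest convexity assertion about a concrete matrix built from $f_{avg}$, $g$, and $h$. I expect that $f_{avg}$—precisely because its ``dummy prior'' is the symmetric average of the two forecasts—is the unique scheme for which this positive-semidefiniteness holds, which would simultaneously explain why it saturates the Blackwell bound on i.i.d.\ structures while suboptimal schemes do not. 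If global convexity fails, a natural fallback is a variational perturbation argument: show that any $\p$ with support of size at least three admits a mean-preserving reallocation of mass that weakly increases $Q_\mu(\p,\p)$, iteratively reducing to binary support.
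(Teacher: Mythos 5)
The statement you are attempting is labelled a \emph{conjecture} in the paper, and the paper does not prove it; Appendix \ref{app:ci} only explains why the authors' own technique falls short. Your proposal accurately reconstructs that state of affairs but does not close the gap. The lower-bound half is fine and matches the paper exactly: the adversarial mixture in the proof of Theorem \ref{th:iid} is conditionally i.i.d., so the bound $\frac{1}{8}(5\sqrt{5}-11)$ transfers verbatim, and your direct computation of $L(f_{avg},\cdot)$ on that structure is consistent with the paper's. The problem is entirely in your step (i) of the upper bound. You correctly diagnose that the i.i.d.\ constraint $\p^1=\p^2$ turns the relative loss from a multilinear functional of $(\p^1,\p^2)$ into a genuine quadratic form $Q_\mu(\p,\p)$ in a single $\p$, so the extreme-point reduction that powered Equation \eqref{eq:lina} no longer comes for free. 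But you then \emph{assume} the needed convexity (positive semidefiniteness of the restricted Hessian, or alternatively the existence of a loss-increasing mean-preserving reallocation for any support of size $\geq 3$) without proving it. This is precisely the step the authors state they ``have failed to prove''; they report only that numerical simulations support it. A quadratic form on a simplex slice need not attain its maximum at an extreme point unless it is convex on that slice, and nothing in your proposal establishes that property for the concrete form built from $f_{avg}$, $g$, and $h$ -- indeed your own text flags this as the ``principal obstacle'' and offers only a speculative fallback.

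In short: your plan is a faithful map of the open problem, not a proof of it. Until step (i) is established (or replaced by some other argument bounding $Q_\mu(\p,\p)$ over all mean-$\mu$ distributions, not just two-point ones), the upper bound $R(f_{avg})\leq\frac{1}{8}(5\sqrt{5}-11)$ on i.i.d.\ structures remains conjectural, and so does the statement. One further small caution: your displayed formula writes the loss as $\sum\p(x_1)\p(x_2)\,h(\mu,x_1,x_2)(\cdots)^2$, but in the paper's notation $h$ already absorbs the factors $p_1p_2$; the substance of your quadratic-form observation survives, but the bookkeeping should be fixed before any Hessian computation is attempted.
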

We discuss this conjecture further in Appendix \ref{app:ci}.

\section{Many Conditionally Independent Experts}
We turn to study how the regret of the ignorant aggregator is affected as the number of independent experts grows. Intuitively, the more forecasts he has at his disposal, the better the ignorant aggregator will perform. This intuition is stronger when experts are additionally assumed {\em identical}. However, one should note that the prevalence of more experts allows the omniscient expert, who serves as a benchmark, to improve as well. Hence, in terms of regret the aforementioned intuition may be misleading.

Let $\mathcal{D}_n$ be the class of all i.i.d. information structures with $n$ experts. The interplay between the improvement of the ignorant aggregator and that of the omniscient expert is given in the following theorem.
\begin{theorem}\label{th:freg}
For any number of conditionally independent and identical experts, $n$, and \emph{any} aggregation scheme, $f:[0,1]^n\rightarrow\mathbb{R}$, the following bound on regret holds: $\ R_{\mathcal{D}_n}(f)\geq \frac{1}{4}-3\sqrt{\frac{\log n}{n}}$.
\end{theorem}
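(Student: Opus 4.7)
\medskip

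\textbf{Plan.} I propose a standard Le Cam two-point lower bound. Let $\eta_n = c\sqrt{\log n/n}$ for a suitable constant $c>0$. I will construct two conditionally i.i.d.\ information structures $\p^{(0)},\p^{(1)}\in\mathcal D_n$ with priors $\mu_\theta = \tfrac12 + (-1)^\theta\eta_n$, designed so that (i) the induced distributions of the forecast vector $X=(x_1,\dots,x_n)$ are close in total variation, $\|\p^{(0)}_X-\p^{(1)}_X\|_{TV}=O(\sqrt{\log n/n})$, and (ii) the omniscient posterior $\hat x_\theta(X)=\p^{(\theta)}(\omega\eq1\mid X)$ concentrates near $\theta$ under $\p^{(\theta)}$ on an event of probability at least $1-O(\sqrt{\log n/n})$. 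The design rationale is that binary signals of weak strength $\Theta(\eta_n)$ give each forecast only $O(\eta_n^2)=O(\log n/n)$ bits of discriminative information, so from $n$ forecasts the aggregator cannot reliably distinguish the two structures; yet the $n$ signals in aggregate carry $\Theta(n\eta_n^2)=\Theta(\log n)$ bits, amply enough for the omniscient expert's log-likelihood ratio to identify $\omega$.

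\medskip

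Given (i)--(ii), the theorem follows by a short argument. By Lemma~\ref{lem:sqdis},
\[
L(f,\p^{(\theta)}) \;=\; \mathbb{E}_{\p^{(\theta)}}\!\bigl[(f(X)-\hat x_\theta(X))^2\bigr].
\]
Apply the pointwise inequality $(a-u)^2+(a-v)^2\ge\tfrac12(u-v)^2$ with $a=f(X),\ u=\hat x_0(X),\ v=\hat x_1(X)$, integrate, and use (i) to swap measures with a TV-error of $O(\sqrt{\log n/n})$ together with (ii) to conclude $(\hat x_0-\hat x_1)^2\ge 1-O(\sqrt{\log n/n})$ on the good event. This yields
\[
\max_{\theta\in\{0,1\}}L(f,\p^{(\theta)})\;\ge\;\tfrac12\bigl(L(f,\p^{(0)})+L(f,\p^{(1)})\bigr)\;\ge\;\tfrac14 - 3\sqrt{\log n/n},
\]
and since $R_{\mathcal D_n}(f)=\sup_{\p\in\mathcal D_n}L(f,\p)\ge\max_\theta L(f,\p^{(\theta)})$, this gives the claimed bound. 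The constant $3$ is obtained by tracking the errors contributed by (i) (via Pinsker-style TV bounds on $n$-fold product mixtures) and (ii) (via a Hoeffding/Chernoff estimate on the aggregate log-likelihood ratio, whose mean and standard deviation are of orders $n\eta_n^2\sim\log n$ and $\sqrt{n}\eta_n\sim\sqrt{\log n}$, respectively).

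\medskip

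\textbf{Main obstacle.} The technical heart is ensuring property (i): that the two joint forecast distributions remain TV-close despite having distinct priors (and hence necessarily distinct signal structures, since $\mathbb E[x_i]=\mu_\theta$ forces the marginal of a single forecast to differ by $O(\eta_n)$ between the two structures). A naive symmetric swap $\omega\leftrightarrow 1-\omega$ fails because it preserves both the forecast-vector distribution \emph{and} the omniscient posterior, collapsing the argument. The challenge is to pair the $O(\eta_n)$ prior perturbation with a mirrored signal structure so that the exponential factors $\mu_\theta^{1-n}$ and $(1-\mu_\theta)^{1-n}$ appearing in the joint density $\p^{(\theta)}(X) = \mu_\theta^{1-n}\prod_i x_iP_\theta(x_i) + (1-\mu_\theta)^{1-n}\prod_i(1-x_i)P_\theta(x_i)$ appear in a compensating combination on the typical set of $X$, and verifying that this cancellation survives integration with the required $O(\sqrt{\log n/n})$ TV bound. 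The resulting balance between the $\log n$ of omniscient information and the $\sqrt{\log n/n}$ of aggregator discriminability is exactly what dictates the choice $\eta_n = c\sqrt{\log n/n}$ and drives the gap $3\sqrt{\log n/n}$ in the final bound.
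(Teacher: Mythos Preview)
Your proposal has a genuine gap that cannot be patched without essentially reproducing the paper's construction.

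First, property (ii) as you state it is impossible. The omniscient posterior $\hat x_\theta(X)=\p^{(\theta)}(\omega=1\mid X)$ is a martingale with $\E_{\p^{(\theta)}}[\hat x_\theta(X)]=\mu_\theta\approx \tfrac12$; it therefore cannot concentrate near $\theta\in\{0,1\}$ on an event of probability $1-o(1)$. What property (ii) \emph{can} say is that $\hat x_\theta(X)$ concentrates near the realized state $\omega$, which is $0$ or $1$ each with probability $\approx\tfrac12$.

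Second, and more seriously, once (ii) is corrected your pointwise inequality needs $(\hat x_0(X)-\hat x_1(X))^2\approx 1$ with probability $1-o(1)$ under $\p^{(0)}$. Since $\hat x_\theta\approx\omega$ within structure $\theta$, this forces the conditional forecast laws to cross-match: $I_0^{(0)}\approx I_1^{(1)}$ \emph{and} $I_1^{(0)}\approx I_0^{(1)}$ (a ``loop''). But with two common forecast values $a<b$ (which you need, else the aggregator separates the structures by the forecast \emph{values} alone), Bayes' rule in each structure together with the loop identities yield
\[
\frac{\mu_0\mu_1}{(1-\mu_0)(1-\mu_1)}=\frac{b^2}{(1-b)^2}
\quad\text{and}\quad
\frac{\mu_0\mu_1}{(1-\mu_0)(1-\mu_1)}=\frac{a^2}{(1-a)^2},
\]
which forces $a=b$, a degenerate signal. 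So no two-point loop exists. The best two-point alternative is a \emph{chain}, i.e.\ only $I_1^{(0)}=I_0^{(1)}$; then $(\hat x_0-\hat x_1)^2\approx 1$ only when $\omega=1$ under $\p^{(0)}$ (and $\omega=0$ under $\p^{(1)}$), i.e.\ on roughly half the space, and your averaging step yields at most $\tfrac18-o(1)$, not $\tfrac14-o(1)$.

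This is exactly why the paper does \emph{not} use a Le Cam two-point bound. It builds a chain of $k$ information structures $I(1),\dots,I(k)$ with $I_1(m)=I_0(m+1)$, so that a Bayesian aggregator who sees even the \emph{exact} forecast distribution is at $\tfrac12$ except at the two ends of the chain, which carry mass $O(1/k)$. Meanwhile the conditional laws $I_0(m),I_1(m)$ are kept $\Theta(1/k)$ apart so the omniscient expert identifies $\omega$ via Hoeffding. Balancing $1/k$ against $e^{-n/(72k^2)}$ at $k=\Theta(\sqrt{n/\log n})$ gives $\tfrac14-O(\sqrt{\log n/n})$. The ``main obstacle'' you flag is not a technicality to be tracked through constants; it is the reason two points are not enough.
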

Thus, as the number of agents grows no aggregation scheme can guarantee a performances that does better than the fixed scheme that always forecasts $0.5$. This stands in sharp contrast to the $n=2$ case. The proof of Theorem \ref{th:freg} is relegated to Appendix \ref{ap:reg-proof}. Below we discuss the key ideas of the proof.

\subsection{Idea of the proof of Theorem \ref{th:freg}}
\label{sec_idea_of_proof}
We start the discussion with the following example (which appears also in \citep{BabAriSmo2017a} and in \citep{PSM}).
Consider the following two information structures for a single expert:
\begin{table}[h]
\centering
\caption{The information structures $I(1),I(2)$.}
\label{tb:infor2}
$I(1):$
\begin{tabular}{ccc}
                           & $\omega=0$                          & $\omega=1$                           \\ \cline{2-3}
\multicolumn{1}{c|}{$s_0$} & \multicolumn{1}{c|}{$\frac{3}{8}$} & \multicolumn{1}{c|}{$\frac{1}{8}$}  \\ \cline{2-3}
\multicolumn{1}{c|}{$s_1$} & \multicolumn{1}{c|}{$\frac{1}{8}$} & \multicolumn{1}{c|}{$\frac{3}{8}$} \\ \cline{2-3}
\end{tabular}
\hspace*{4mm}
$I(2):$
\begin{tabular}{ccc}
                           & $\omega=0$                          & $\omega=1$                           \\ \cline{2-3}
\multicolumn{1}{c|}{$s_0$} & \multicolumn{1}{c|}{$\frac{3}{40}$} & \multicolumn{1}{c|}{$\frac{1}{40}$}  \\ \cline{2-3}
\multicolumn{1}{c|}{$s_1$} & \multicolumn{1}{c|}{$\frac{9}{40}$} & \multicolumn{1}{c|}{$\frac{27}{40}$} \\ \cline{2-3}
\end{tabular}
\end{table}
The priors in these two information structures are $\mu_1=\frac{1}{2}$ and $\mu_2=\frac{7}{10}$, respectively.

Let us denote by $(e.m.\omega)$, where $m=1,2$ and $\omega=0,1$, the event that the information structure is $I(m)$ and the state is $\omega$. A straightforward computation shows that the distribution over the experts' posteriors (forecasts) is \emph{identical}
for the two events $(e.1.1)$ and $(e.2.0)$. In both cases every expert will either observe $s_0$ with probability $\frac{1}{4}$ and forecast $\frac{1}{4}$ or will observe $s_1$ with probability $\frac{3}{4}$ and will forecast $\frac{3}{4}$. Let us denote this posterior distribution over forecasts by $\psi$

Consider an adversary's mixed strategy over these two information structures with weights $\frac{1-\mu_2}{1-\mu_2+\mu_1}=\frac{3}{8}$ assigned to information structure $I(1)$ and $\frac{\mu_1}{1-\mu_2+\mu_1}=\frac{5}{8}$ to information structure $I(2)$.
Our ignorant aggregator cannot perform better than a Bayesian aggregator, who knows the two information structures, the mixed strategy, and the forecasts (notice that, unlike the omniscient expert, this hypothetical Bayesian aggregator cannot observe the actual signals).

Assume that $n$ is large enough so that the empirical distribution of the sample of $n$ forecasts is ``essentially'' equal to the \emph{precise} posterior distribution. Thus, in events $(e.1.1)$ and $(e.2.0)$ the Bayesian aggregator will observe $\psi$ and his Bayesian belief about the event $\omega=1$ will be $\frac{3/8 \cdot \mu_1}{3/8 \cdot \mu_1 + 5/8 \cdot(1-\mu_2)}=\frac{1}{2}$. In other words, the adversary can set the probabilities over $I(1)$ and $I(2)$ such that even the Bayesian aggregator who observes the distribution over posteriors precisely will have complete uncertainty about the state in cases $(e.1.1)$ and $(e.2.0)$. All the more so is this true for the ignorant aggregator.

Unfortunately for the adversary, there are two additional cases $(e.1.0)$ and $(e.2.1)$,  where the Bayesian aggregator succeeds in determining the state. However, if we now introduce a third information structure, $I(3)$, such that the posteriors in the events  $(e.2.1)$ and $(e.3.0)$ coincide, then a mixed strategy over the three information structures can be constructed such that the same uncertainty will prevail when this posterior is observed.

For concreteness, the new information structure, $I(3)$, is
\begin{table}[h]
\label{tb:infor}
\begin{center}
$I(3):$
\begin{tabular}{ccc}
                           & $\omega=0$                          & $\omega=1$                           \\ \cline{2-3}
\multicolumn{1}{c|}{$s_0$} & \multicolumn{1}{c|}{$\frac{3}{328}$} & \multicolumn{1}{c|}{$\frac{1}{328}$}  \\ \cline{2-3}
\multicolumn{1}{c|}{$s_1$} & \multicolumn{1}{c|}{$\frac{81}{328}$} & \multicolumn{1}{c|}{$\frac{243}{328}$} \\ \cline{2-3}
\end{tabular}
\end{center}
\end{table}
and the probabilities over $I(1),I(2),I(3)$ are $\frac{9}{65},\frac{15}{65},\frac{41}{65}$.

We proceed to define the information structures $I(m)$ and the corresponding mixed strategy iteratively such that the Bayesian aggregator faces complete uncertainly unless the events are $(e.1.0)$ and $(e.m.1)$ are realized. In this construction, the posterior of all the experts in all the information structures is either $\frac{1}{4}$ or $\frac{3}{4}$.

It turns out that in order for the probability of the events $(e.1.0)\cup (e.m.1)$ to vanish, we should repeat this construction with the pair of forecasts $\frac{1}{2}\pm \epsilon$ (for sufficiently small $\epsilon>0$) instead of  $\{\frac{1}{4},\frac{3}{4}\}$.

On the other hand, in order for the omniscient expert to perform well by observing a \emph{sample} from the distribution the value of $\epsilon$ cannot be too small (note that if we set $\epsilon=0$ all the experts' forecasts will be equal to $\frac{1}{2}$, in which case the omniscient expert is also left clueless). Some tedious (yet standard) calculations  show that if we set $\epsilon=\Theta(\sqrt{\frac{\log n}{n}})$ and let the mixed strategy for the adversary be with support of size $k=\Theta(\sqrt{\frac{n}{\log(n)}})$, then we have the following two phenomena:
\begin{itemize}
\item The events $(e.1.0)\cup (e.k.1)$ occur with small probability (and thus the Bayesian aggregator cannot perform well).
\item The omniscient expert can determine the state with high probability, because the conditional distributions over posteriors $I_0(m)$ (i.e., $I(m)$ conditional on $\omega=0$) and $I_1(m)$ are sufficiently ``far" from one another, for all $1\leq m \leq k$.
\end{itemize}
\section{Discussion}

We study non-Bayesian forecast aggregation and introduce an evaluation criterion for such aggregation schemes. This criterion is based on the notion of regret with respect to a square loss utility function. There are various degrees of freedom in the model as well as in choice of an evaluation criterion which provide room for further research.

\subsection{Choice of benchmark}\label{sec:benchmark}
In this paper we study the regret of aggregation, endowed to ignorant aggregators, while  using an omniscient expert as our benchmark.  Recall that the omniscient expert knows the information structure and also observes the actual signals experts received. This benchmark, one may argue, is too challenging. A less challenging benchmark should improve the aggregator's performance.  We discuss two such alternatives.

\subsubsection{The Bayesian aggregator}\label{sec:aba}
Recall the notion of a Bayesian aggregator introduced in section \ref{sec_idea_of_proof}. This Bayesian aggregator knows the information structure and \emph{the forecasts of the experts} but does not observe the experts' private signals. Seemingly the regret associated with a Bayesian aggregator is smaller than that associated with the omniscient expert.
Apparently, for the families of information structures studied here (Theorems \ref{th:m} and Theorem \ref{th:iid}), this is not true. This follows from the fact that for such information structures both benchmarks entail a similar loss when the adversary uses his Maxmin mixed strategy.

The disappointing regret obtained in Proposition \ref{pro:g}, when the adversary has no restriction on the choice of a strategy and can use correlated signals, cannot be improved when using the Bayesian aggregator as a benchmark. This, however, is not a straightforward observation. In fact, for information structure designed in the proof of the proposition, the omniscient expert does remarkably well while the Bayesian aggregator fails miserably. Nevertheless,  a mild modification of said information structure provides the desired outcome, where the ignorant aggregator suffer a square loss of $\frac{1}{4}-\epsilon$ (where $\epsilon>0$ is arbitrarily small) while the Bayesian aggregator suffers a square loss of zero.

Below we formalize the notion of a Bayesian aggregator and state a proposition analogous to Proposition \ref{pro:g}.

A Bayesian aggregator's posterior is
$$\tilde{x}(x_1,x_2)=\p(\omega=1 | x_1(s_1)=x_1,x_2(s_2)=x_2).$$
The corresponding regret is given by
\begin{align*}
R^{AO}(f,\p)=\E_{(\omega,s_1,s_2)\sim \p} [(f(x_1(s_1),x_2(s_2))-\omega)^2] - \E_{(\omega,s_1,s_2)\sim \p} [(\tilde{x}(x_1(s_1),x_2(s_2)))-\omega)^2].
\end{align*}
\begin{proposition}\label{pro:abag}
For general information structures, there is no aggregation scheme that guarantees a regret below $\frac{1}{4}$.
\end{proposition}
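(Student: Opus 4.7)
The plan is to modify the information structure of Proposition~\ref{pro:g} so that every forecast pair arises under a single state, making the Bayesian aggregator's loss vanish while still forcing the ignorant aggregator into large absolute loss. For each $\epsilon\in(0,\tfrac12)$ I will construct two candidate structures $\p^A(\epsilon)$ and $\p^B(\epsilon)$, both with two signals per expert and with the four forecast pairs taking values in $\{(a_i,a_j):i,j\in\{1,2\}\}$ where $a_1=\tfrac12-\epsilon$ and $a_2=\tfrac12+\epsilon$. In $\p^A$ the diagonal pairs will be realized only with $\omega=1$ and the off-diagonal pairs only with $\omega=0$; in $\p^B$ this assignment is flipped. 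Solving the posterior-consistency constraints $\E[\omega\mid x_i]=x_i$, which is the first routine calculation, gives joint probabilities $\bigl((1/2-\epsilon)^2,\,1/4-\epsilon^2,\,1/4-\epsilon^2,\,(1/2+\epsilon)^2\bigr)$ on the four pairs in $\p^A$, with the diagonal and off-diagonal probabilities swapped in $\p^B$. At $\epsilon=0$ both degenerate into the structure used in Proposition~\ref{pro:g} (up to relabelling the state), so these are precisely the ``mild modification'' promised by the hint.

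The critical feature of both structures is that each forecast pair is realized under a single state, so $\tilde x\in\{0,1\}$ and the Bayesian aggregator attains absolute loss zero. Hence (by the same computation as in Lemma~\ref{lem:sqdis}) the regret $R^{AO}(f,\p)$ equals the ignorant aggregator's own absolute loss $\E[(f(x_1,x_2)-\omega)^2]$. Writing $c_{ij}:=f(a_i,a_j)$, the regrets $R_A(\epsilon)$ and $R_B(\epsilon)$ will take explicit separable forms in the four independent variables $c_{11},c_{12},c_{21},c_{22}\in[0,1]$.

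The heart of the argument, and the step I expect to need the most care, is a pointwise lower bound
\[
  R_A(\epsilon)+R_B(\epsilon)\ \geq\ \tfrac12-O(\epsilon^2)
\]
valid for \emph{every} choice of $f$. Because the sum decouples into four independent one-variable problems, it reduces to the elementary inequality $\alpha(c-1)^2+\beta c^2\geq \alpha\beta/(\alpha+\beta)$ on $c\in[0,1]$. For the two corner pieces this gives a minimum of $(1/4-\epsilon^2)^2/(1/2+2\epsilon^2)=\tfrac18-O(\epsilon^2)$, and for the two off-diagonal pieces $(1/4-\epsilon^2)\cdot\tfrac12=\tfrac18-O(\epsilon^2)$; summing the four yields the desired bound.

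The conclusion is then immediate: for every $\epsilon>0$,
\[
  \sup_{\p}R^{AO}(f,\p)\ \geq\ \max\{R_A(\epsilon),R_B(\epsilon)\}\ \geq\ \tfrac{R_A(\epsilon)+R_B(\epsilon)}{2}\ \geq\ \tfrac14-O(\epsilon^2),
\]
and sending $\epsilon\to 0$ yields $\sup_\p R^{AO}(f,\p)\geq \tfrac14$. The conceptual point is that pairing the two mirror-symmetric structures $\p^A$ and $\p^B$ forces the ignorant aggregator to commit simultaneously to incompatible best responses at each of the four forecast pairs, which is what makes the lower bound independent of $f$.
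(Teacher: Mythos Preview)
Your proposal is correct and follows essentially the same route as the paper. Both proofs build a pair of ``mirror'' information structures in which every realized forecast pair determines the state (so the Bayesian aggregator has zero loss), with one structure assigning $\omega=1$ to the diagonal pairs and $\omega=0$ to the off-diagonal pairs and the other structure doing the reverse. The paper parametrizes the two forecast values as $\frac{1\pm\delta}{2\pm\delta}$ and argues via the explicit Bayesian best response to the $50$--$50$ mixture; your symmetric parametrization $a_{1,2}=\tfrac12\pm\epsilon$ and the $\max\{R_A,R_B\}\ge\tfrac12(R_A+R_B)$ device are a bit cleaner but amount to the same thing.

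One small imprecision worth fixing: your description ``diagonal and off-diagonal probabilities swapped in $\p^B$'' is not quite what the posterior-consistency equations force. Solving them for $\p^B$ gives $p_{11}=(\tfrac12+\epsilon)^2$, $p_{22}=(\tfrac12-\epsilon)^2$, and $p_{12}=p_{21}=\tfrac14-\epsilon^2$; that is, the two diagonal masses swap with each other while the off-diagonal masses stay put. Your subsequent computation of $R_A+R_B$ in fact uses exactly these values (e.g.\ the ``corner'' minimum $(\tfrac14-\epsilon^2)^2/(\tfrac12+2\epsilon^2)$ requires $\alpha=(\tfrac12-\epsilon)^2$ and $\beta=(\tfrac12+\epsilon)^2$), so the argument itself is sound; only the verbal description needs adjusting.
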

The proof is relegated to Appendix \ref{sec:abag-proof}.

\noindent
\subsubsection{Best expert benchmark}
Inspired by the regret-minimization literature in \emph{repeated} expert advice settings \citep{cesa}, one may compare the square loss of the ignorant aggregator with the square loss of the better expert, namely,
\begin{align*}
R^B(f,\p)=\E_{(\omega,s_1,s_2)\sim \p} [(f(x_1(s_1),x_2(s_2))-\omega)^2] - \min_{i=1,2} \E_{(\omega,s_i)\sim \p} [(x_i(s_i)-\omega)^2].
\end{align*}
The result on Blackwell-ordered information structures remains the same, because the best expert coincides with the omniscient expert in this environment.
Interestingly, this also holds for conditionally independent information structures with two experts, where the  minimal regret is between $0.0225$ and $0.0260$.%
\footnote{We omit the underlying reasoning that leads to this result.}

For conditionally i.i.d. information structures, on the other hand, the ignorant aggregator obviously can perform as good as the best expert (in expectation) simply by mimicking Expert 1. By the symmetry of the problem, all experts suffer the same expected loss. This simple observation holds for any number of agents.

The minimal regret with respect to the best-expert benchmark in general information structures, allowing for correlation, remains an interesting open problem (even for the case of two experts).


\subsection{Extensions}
Our  model studies a simple setting where
there are only two states of nature, the scoring rule is set to be the square loss function, and we mostly focus on the case of two experts. For conditional i.i.d. signals we provided a lower bound on the regret that the aggregator can guarantee as a function of the number of experts $n$.
The simplicity of the problem was crucial for our ability to crack it. Thus, extending our results to a larger state space or other scoring rules is by no means straightforward.
We hope to further understand this in future work.

\bibliography{bibnew}
\newpage
\appendix
\section{Some intuition for the precision scheme}\label{sec:proof-dis}

\begin{figure}[h]
\begin{center}
\begin{tikzpicture}[xscale=0.6,yscale=0.6]
\draw[dashed] (0,0) -- (10,0);
\draw[dashed] (0,5) -- (10,5);

\filldraw (0,0) circle (0.2);
\filldraw (0,5) circle (0.2);
\filldraw (10,0) circle (0.2);
\filldraw (10,5) circle (0.2);
\filldraw (3,0) circle (0.2);
\filldraw (3,5) circle (0.2);
\filldraw (8,0) circle (0.2);
\filldraw (8,5) circle (0.2);
\filldraw (5,10) circle (0.2);

\draw[->] (5,10) -- (3.1,5.2);
\draw[->] (5,10) -- (7.8,5.2);

\draw[->] (3,5) -- (0.1,0.3);
\draw[->] (3,5) -- (7.8,0.2);

\draw[->] (8,5) -- (3.2,0.2);
\draw[->] (8,5) -- (9.9,0.3);

\draw(0,-0.1) node[below] {$0$};
\draw(3,-0.1) node[below] {$y$};
\draw(8,-0.1) node[below] {$x$};
\draw(10,-0.1) node[below] {$1$};

\draw(2.5,5) node[above] {$y$};
\draw(8.5,5) node[above] {$x$};

\draw(4,7.5) node[left] {$1-\alpha$};
\draw(6.5,7.5) node[right] {$\alpha$};

\draw(2.4,4) node[left] {$\frac{x-y}{x}$};
\draw(4.1,4) node[right] {$\frac{y}{x}$};

\draw(6.9,4) node[left] {$\frac{1-x}{1-y}$};
\draw(8.3,4) node[right] {$\frac{x-y}{1-y}$};

\draw(-0.1,0) node[left] {$X_2$};
\draw(-0.1,5) node[left] {$X_1$};
\end{tikzpicture}
\end{center}
\caption{The martingale $\p_{x,y,\alpha}$.}\label{fig:good2}
\end{figure}
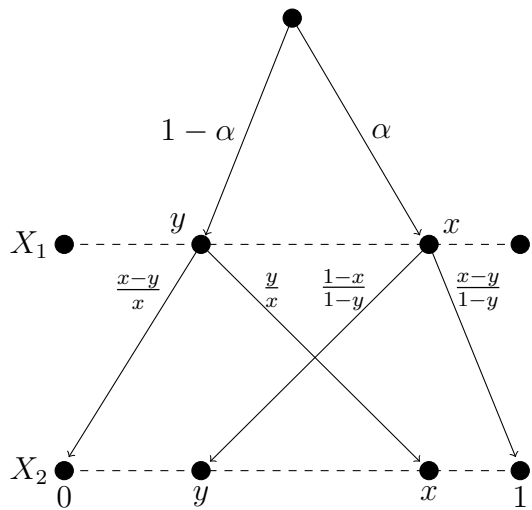
The zero-sum game that we study does not admit a pure maxmin strategy for the adversary. However, it turns out that it always admits a simple maxmin strategy for the adversary where the randomization is done only on two information structures. The adversary chooses (purely) a martingale of posteriors $X_0,X_1,X_2$ of length 2, where $X_1$ is the posterior of the less-informed expert and $X_2$ is the posterior of the more-informed expert. The randomization is done only on the identity of the less- and more-informed experts. Each expert is more-informed with probability $\frac{1}{2}$.
A formal methodology to capture this statement is to consider an \emph{equivalent} game $\Gamma$, where the adversary chooses a martingale of length 2, and the ignorant aggregator is restricted to choosing an \emph{anonymous} aggregation scheme $f(x_1,x_2)=f(x_2,x_1)$. The game $\Gamma$ is equivalent to the original game in the following respect:
\begin{itemize}
\item The value of these two games is equal.
\item A maxmin strategy of the ignorant aggregator in $\Gamma$ is a maxmin strategy in the original game.
\item A maxmin martingale of the adversary in $\Gamma$ can be translated to a maxmin strategy in the original game by choosing the less and more informed players with equal probability.
\end{itemize}

The existence of pure approximate optimal strategies in the game $\Gamma$ is a corollary of Sion's minmax theorem \citep{Sion}. Once this is verified we may focus on studying the adversary's minmax pure strategies. Note that the simplest pure strategies available to the adversary that achieve positive regret against a best-replying aggregator are martingales of the form demonstrated in Figure \ref{fig:good2}. Namely, martingales for which $X_1$ takes on two values $x,y$ with probabilities $\alpha,1-\alpha$ and $X_2$ can take only two pairs of values, depending on the outcome of $X_1$. These pairs can be either $(x,1)$ or $(0,y)$. The corresponding probabilities are uniquely determined by the martingale condition. Let us denote this family of martingales by $\mathcal{M}$, where each $M \in \mathcal{M}$ is a function of three parameters, $M = M_{x,y,\alpha}$.

The importance of the family $\mathcal{M}$ is due to the following observation:

\begin{proposition}\label{prop_3}
The adversary's minmax strategy is in $\mathcal{M}$.
\end{proposition}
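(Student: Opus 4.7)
The plan is to prove Proposition~\ref{prop_3} by showing that any pure $\varepsilon$-optimal martingale for the adversary in $\Gamma$ can be replaced, without decreasing the regret against the best anonymous aggregator, by an element of $\mathcal{M}$. Sion's minmax theorem (invoked in the text) guarantees that pure $\varepsilon$-optimal strategies exist, so I fix such an $M=(X_0,X_1,X_2)$ and aim to produce $M'\in\mathcal{M}$ whose regret is at least as large up to $\varepsilon$. The central tool is a decomposition of the best-anonymous-aggregator's loss according to the unordered forecast pair $\{X_1,X_2\}$.

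First I would set up the decomposition. Because $f$ is anonymous, its pointwise best response on the event $\{X_1,X_2\}=A$ is $\E[X_2\mid\{X_1,X_2\}=A]$, so by Lemma~\ref{lem:sqdis} the minimal anonymous loss against $M$ equals
\begin{equation*}
\sum_{A}\Pr(\{X_1,X_2\}=A)\cdot \mathrm{Var}\bigl(X_2\mid \{X_1,X_2\}=A\bigr),
\end{equation*}
summed over unordered pairs $A$. A pair $A$ contributes zero regret whenever it is realized by only one ordering of $(X_1,X_2)$ (so $X_2$ is conditionally constant) or whenever $A$ contains $0$ or $1$ (since the martingale constraint $\E[X_2\mid X_1]=X_1$ then pins $X_2$). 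Thus the entire regret is carried by \emph{ambiguous} pairs $\{x,y\}\subset(0,1)$ for which both orderings $(x,y)$ and $(y,x)$ of $(X_1,X_2)$ occur with positive probability.

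Given this, the next step is to show the maximizer has a single ambiguous pair, taking the $\mathcal{M}$ form. Fix an ambiguous pair $\{x,y\}$ with $0<x<y<1$ and let $\alpha=\Pr(X_1=x\mid X_1\in\{x,y\})$ in $M$. The martingale constraints force $\Pr(X_2=y\mid X_1=x)\leq x/y$ and $\Pr(X_2=x\mid X_1=y)\leq (1-y)/(1-x)$, with equality iff the residual conditional mass of $X_2\mid X_1=x$ sits at $0$ and of $X_2\mid X_1=y$ sits at $1$---precisely the $M_{x,y,\alpha}\in\mathcal{M}$ construction. Moving the residual mass to $\{0,1\}$ weakly increases both $p_1=\Pr((X_1,X_2)=(x,y))$ and $p_2=\Pr((X_1,X_2)=(y,x))$; since the variance-weighted contribution of the ambiguous pair has the form $p_1p_2(y-x)^2/(p_1+p_2)$ and is coordinatewise increasing in $(p_1,p_2)$, this replacement weakly increases the regret coming from $\{x,y\}$ while routing the displaced mass to the unambiguous pairs $\{0,x\}$ and $\{y,1\}$, which contribute nothing.

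The main obstacle, which I expect to require the most care, is handling an $M$ with several ambiguous pairs simultaneously, because the single-pair reduction above discards the contributions of the other ambiguous pairs. The plan is to decompose $M$ as a convex combination of ``atomic'' martingales, each carrying only one of its ambiguous pairs (with residual mass sent to $\{0,1\}$), use bilinearity of $L(f,\cdot)$ in $M$ to select a summand whose regret is at least the average, and then apply the single-pair reduction to that summand. Compactness of the parameter range $\{(x,y,\alpha):0\leq x\leq y\leq 1,\,0\leq \alpha\leq 1\}$ then allows us to pass from $\varepsilon$-optimal to optimal and conclude that the adversary admits a minmax strategy in $\mathcal{M}$.
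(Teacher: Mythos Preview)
Your variance decomposition and the single-ambiguous-pair reduction are correct and cleanly stated. The paper, by contrast, does not supply an independent proof here: it observes that the proposition follows a posteriori from Theorem~\ref{th:m} (which exhibits an explicit optimal adversarial martingale lying in $\mathcal{M}$) and mentions but omits a direct argument ``which requires tedious computations.'' So your route is genuinely different from what the paper actually provides.

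The plan for several ambiguous pairs, however, has a real gap. You propose to write $M=\sum_i \lambda_i M_i$ with each $M_i$ carrying a single ambiguous pair, invoke linearity of $L(f,\cdot)$ in $M$, and then ``select a summand whose regret is at least the average.'' But the adversary's payoff against a best-responding aggregator is $\min_f L(f,M)$, which is \emph{concave} in $M$, not linear: if $f^*$ is optimal for $M$ then indeed some $L(f^*,M_i)\ge L(f^*,M)$, yet $\min_f L(f,M_i)\le L(f^*,M_i)$, so the inequality points the wrong way. Equality would require $f^*$ to be simultaneously optimal for every $M_i$, and the decomposition you envisage need not even exist: take $X_1\in\{0.2,0.5,0.8\}$ with $X_2\mid X_1=0.5$ supported on $\{0.2,0.8\}$ (two ambiguous pairs, $\{0.2,0.5\}$ and $\{0.5,0.8\}$). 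Any element of $\mathcal{M}$ with $0.5$ in the support of $X_1$ forces $X_2\mid X_1=0.5$ to put mass at $0$ or at $1$, so no convex combination of $\mathcal{M}$-martingales can reproduce this $M$. A correct multi-pair reduction presumably has to exploit the joint constraints that the martingale structure imposes on the probabilities $(p_1^A,p_2^A)$ across pairs, which is likely where the ``tedious computations'' the paper alludes to enter.
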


In hindsight, Proposition \ref{prop_3} obviously follows from Theorem \ref{th:m}, whose proof, in turn, does not rely on this proposition. However, our original proof of Proposition \ref{prop_3} was independent and did not use the specific structure of the optimal aggregation scheme. We omit the proof, which requires tedious computations.

If the ignorant aggregator knows the martingale $M_{x,y,\alpha}$, then his anonymous best reply is (we leave out the computational details):

\begin{eqnarray}\label{eq:apred}
f_{x,y,\alpha}(x_1,x_2)=\begin{cases} \frac{(1-\alpha)\frac{(1-y)}{(1-x)}\cdot x+\alpha\frac{x}{y}\cdot y}{(1-\alpha)\frac{1-y}{1-x}+\alpha\frac{x}{y}} &\text{ if }\{x_1,x_2\}=\{x,y\} \\
0 &\text{ if }x_1=0 \text{ or } x_2=0\\
1 &\text{ if }x_1=1 \text{ or } x_2=1.
\end{cases}
\end{eqnarray}

The resulting relative loss for the ignorant aggregator is therefore:
\begin{equation}\label{eq:rev}
L(f_{x,y,\alpha}, M_{x,y,\alpha})=2(x-y)^2\frac{(1-\alpha)\frac{(1-y)}{(1-x)}\cdot\alpha\frac{x}{y}}{(1-\alpha)\frac{(1-y)}{(1-x)}+\alpha\frac{x}{y}}.
\end{equation}

Given a pair of values, $(x,y)$, the corresponding $\alpha$ that maximizes the relative loss, $L(M_{x,y,\alpha},f_{x,y,\alpha})$, is denoted by  $\alpha^*(x,y)$. From the first-order conditions we derive a closed-form solution for $\alpha^*(x,y)$:

$$\alpha^*(x,y)=\frac{\sqrt{y(1-y)}}{\sqrt{y(1-y)}+\sqrt{x(1-x)}}.$$

%


An ignorant aggregator who does not know $\alpha$, but knows $x,y$ (because these are the two forecasts of the experts), may assume that the adversary chooses $\alpha=\alpha^*$. In such a case the ignorant aggregator's forecast
is obtained by replacing $\alpha$ with the formula for $\alpha^*(x,y)$ in the first case of Equation \eqref{eq:apred}:
\begin{align}\label{eq:far-point}
\p_{x,y,\alpha^*}(\omega=1|\{x_1,x_2\}=\{x,y\})= \frac{\big(\sqrt{y(1-y)}\big)x+\big(\sqrt{x(1-x)}\big)y}{\sqrt{y(1-y)}+\sqrt{x(1-x)}}\nonumber
\\
=\frac{\sqrt{\phi(x)}}{\sqrt{\phi(x)}+\sqrt{\phi(y)}}\cdot x + \frac{\sqrt{\phi(y)}}{\sqrt{\phi(x)}+\sqrt{\phi(y)}}\cdot y,
\end{align}
where $\phi$ is the precision function. Note that this coincides with the precision scheme whenever the two forecasts are sufficiently far apart.

Recall that we derive Equation \eqref{eq:far-point} under the assumption that the adversary chooses the value $\alpha=\alpha^*(x,y)$, which is optimal against an aggregator who best replies to a known martingale.
Once the ignorant aggregator fixes the scheme provided by Equation \eqref{eq:far-point}, we can reconsider the optimal martingale for the adversary. It turns out that in some cases the adversary can choose some value $\alpha \neq \alpha^*(x,y)$ and increase the relative loss. This happens only for martingales $\p_{x,y,\alpha}$ where $x-y < 0.4$. The interim conclusion was that the scheme provided in Equation \eqref{eq:far-point} guarantees the regret $\frac{1}{8}(5\sqrt 5-11)$ for all martingales $\p_{x,y,\alpha}$ where $x-y \geq 0.4$. By adjusting the weights to $(\frac{\phi(x)}{\phi(x)+\phi(y)},\frac{\phi(y)}{\phi(x)+\phi(y)})$, whenever $x-y<0.4$, we derive the optimal scheme.%
\footnote{The adjustment was inspired by simulations that demonstrated the reason for the failure of aggregation scheme \eqref{eq:far-point} for close forecasts.}
\section{More on conditionally independent signals}\label{app:ci}
\begin{proposition*}[Proposition \ref{pro:iid} restated]
For conditionally independent information structures, the average prior scheme guarantees a regret of $0.0250$.
\end{proposition*}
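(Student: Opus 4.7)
The plan is to follow the same structure as the upper-bound half of Theorem~\ref{th:iid}, adapted to the new aggregation scheme $f(x_1,x_2)=g(ep(x_1,x_2),x_1,x_2)$. The first step is to observe that the basic geometric decomposition used there is completely agnostic to which scheme $f$ the aggregator plays. Namely, an adversary in $\CI$ is a triple $(\mu,\p^1,\p^2)$ with $\mathbb{E}[\p^i]=\mu$, and the probability of observing a realized pair of posteriors $(x_1,x_2)$ equals $p_1 p_2\, h(\mu,x_1,x_2)$ where $p_i$ is the weight $\p^i$ puts on $x_i$. Since this expression is multilinear in $(p_1,p_2)$, the argument leading to Equation~\eqref{eq:lina} goes through verbatim and the adversary's best reply is again supported on two-point distributions $\p^i_{\mu,y_i,z_i}$ with $y_i\le\mu\le z_i$.

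Next, I would replace the square-loss integrand $r(\mu,x_1,x_2)$ of Equation~\eqref{eq:r} by its analogue for the new scheme: by Lemma~\ref{lem:sqdis},
\begin{equation*}
\tilde r(\mu,x_1,x_2)=\bigl(g(\mu,x_1,x_2)-g(ep(x_1,x_2),x_1,x_2)\bigr)^2.
\end{equation*}
Substituting $\tilde r$ into the analogue of Equation~\eqref{eq:5dObj} produces a closed-form expression for $L(f,(\mu,\p^1_{\mu,y_1,z_1},\p^2_{\mu,y_2,z_2}))$ as a sum of four terms, one for each corner $(x_1,x_2)\in\{y_1,z_1\}\times\{y_2,z_2\}$, and reduces the adversary's problem to a five-parameter maximization over $\mu, y_1,z_1,y_2,z_2$ with $y_i\le\mu\le z_i$.

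The main obstacle, compared with Theorem~\ref{th:iid}, is that $ep$ is piecewise linear with a jump across the line $x_1+x_2=1$, so $\tilde r$ is only piecewise rational. Each of the four corners $(y_1,y_2),(y_1,z_2),(z_1,y_2),(z_1,z_2)$ can lie on either side of the sum-one line, giving up to $2^4=16$ combinatorial cells of the $(\mu,y_1,z_1,y_2,z_2)$ domain on which the objective is a smooth rational function. I would enumerate these cells, observe that many collapse by the symmetry $x_1\leftrightarrow x_2$ and by the constraints $y_i\le\mu\le z_i$, and hand each remaining cell to Matlab, exactly as in the proof of Theorem~\ref{th:iid}. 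The verification that all cell-wise maxima are bounded by $0.0250$ (with the global maximum attained on one specific cell) is the crux of the argument; it is a routine but finite numerical computation, not a conceptually new step.

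Finally, to conclude $R_\CI(f)\le 0.0250$, I would invoke the multilinearity reduction to say the supremum over all $\p\in\CI$ equals the supremum over the reduced five-parameter family, which the numerical step bounds by $0.0250$. Together with the lower bound of $\tfrac{1}{8}(5\sqrt 5-11)\approx 0.0225$ already established in Theorem~\ref{th:iid} (which holds against \emph{any} aggregation scheme, hence also against this one), this proves the claim.
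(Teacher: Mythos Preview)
Your proposal is correct and follows essentially the same approach as the paper's proof: observe that the multilinearity reduction in Theorem~\ref{th:iid} is scheme-agnostic, replace the integrand by $\tilde r(\mu,x_1,x_2)=(g(\mu,x_1,x_2)-g(ep(x_1,x_2),x_1,x_2))^2$, and hand the resulting five-parameter objective \eqref{eq:5dObj} to Matlab. The paper does not spell out the piecewise-cell enumeration you describe (it simply reports the Matlab maximum of $0.0250$, attained at $\mu=y_1=z_1=0.114$, $y_2=0$, $z_2=0.744$), but that is a minor technical elaboration rather than a different route.
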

The proof of Proposition \ref{pro:iid} is very  similar to the proof of Theorem \ref{th:iid}.
\begin{proof}[Proof of Proposition \ref{pro:iid}]
We note that the arguments in the proof of Theorem \ref{th:iid} are universal for all aggregation schemes (not particularly the average prior scheme). The only change is in the definition of $r(\mu,x_1,x_2)$ (Equation \eqref{eq:r}), where $r$ in our case is given by
$$r(\mu,x_1,x_2)=\left(g(\mu,x_1,x_2)-g(ep(x_1,x_2),x_1,x_2)\right)^2.$$
Matlab maximization of the five-dimensional objective function $R(f,(\mu,\p^1_{\mu,y^1,z^1},\p^2_{\mu,y^2,z^2}))$ (see Equation \eqref{eq:5dObj}) yields a global maximum of $0.0250$ at the point $\mu=0.114$, $y_1=z_1=0.114$, $y_2=0$, $z_2=0.744$.
\end{proof}
Now we turn to some intuition about Conjecture \ref{con:iid}.
\begin{conj*}[Conjecture \ref{con:iid} restated]
For conditionally i.i.d. information structures, the minimal regret that can be guranteed is equal to $\frac{1}{8}(5\sqrt 5-11)\approx 0.0225$, and the average prior scheme gurantees this regret.
\end{conj*}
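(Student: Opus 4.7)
Proposal. We prove the two inequalities separately.

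The lower bound $R_{\mathcal{D}_2}(f) \geq \frac{1}{8}(5\sqrt{5}-11)$ for every aggregation scheme $f$ is already established in the course of Theorem \ref{th:iid}: the adversarial mixed strategy constructed there is supported on conditionally i.i.d. information structures, so the argument restricts to the subclass $\mathcal{D}_2$ without change. To pin down a single worst-case i.i.d. structure for $f_{avg}$, a direct calculation maximizing $\frac{x}{1-x}(\tfrac{1}{2}-x)^2$ on $(0,\tfrac{1}{2})$ shows that the structure with prior $\mu=\tfrac{1}{4}(3-\sqrt{5})$ and common posterior distribution supported on $\{0,\tfrac{1}{2}\}$ already attains loss exactly $\frac{1}{8}(5\sqrt{5}-11)$ against $f_{avg}$, so $R_{\mathcal{D}_2}(f_{avg}) \geq \frac{1}{8}(5\sqrt{5}-11)$ as well.

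For the upper bound $R_{\mathcal{D}_2}(f_{avg}) \leq \frac{1}{8}(5\sqrt{5}-11)$ the plan proceeds in two steps. First, by Lemma \ref{lem:sqdis}, for any i.i.d.\ structure $(\mu,\p,\p)$,
\[
L(f_{avg}, (\mu, \p, \p)) = \int\!\!\int K_\mu(x_1, x_2)\, d\p(x_1)\, d\p(x_2),
\]
where
\[
K_\mu(x_1, x_2) = \left(\frac{x_1 x_2}{\mu} + \frac{(1-x_1)(1-x_2)}{1-\mu}\right)\bigl(f_{avg}(x_1,x_2) - g(\mu, x_1, x_2)\bigr)^2.
\]
The aim is to reduce the adversary's problem to two-point posterior distributions $\p_{\mu,y,z}$ with $y \leq \mu \leq z$, which are the extreme points of the mean-$\mu$ slice of $\Delta([0,1])$. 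Once this reduction is in hand, the problem collapses to the three-parameter maximization of the explicit rational function $\Phi(\mu,y,z) := L(f_{avg}, (\mu, \p_{\mu,y,z}, \p_{\mu,y,z}))$, obtained by summing the four contributions from the pairs $(y,y),(y,z),(z,y),(z,z)$ and using the identities $f_{avg}(y,y)=y$, $f_{avg}(z,z)=z$, and $f_{avg}(y,z)=g(\tfrac{y+z}{2},y,z)$. The verification that $\sup \Phi = \frac{1}{8}(5\sqrt{5}-11)$, attained at $(\mu,y,z)=(\tfrac{1}{4}(3-\sqrt{5}),0,\tfrac{1}{2})$ and its $x \mapsto 1-x$ reflection, then follows by the same Matlab-assisted first-order analysis used to prove Theorems \ref{th:m} and \ref{th:iid}.

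The main obstacle is the reduction to two-point distributions. Unlike the conditionally independent setting, the i.i.d.\ constraint couples $\p^1=\p^2=\p$, and $L(f_{avg}, (\mu, \p, \p))$ is genuinely quadratic (not bilinear in two separate arguments) in $\p$; because $K_\mu$ is not manifestly a positive-definite kernel, the usual convexity-based extreme-point argument does not apply directly. My plan is to show that for any finite-support optimizer $\p^\star$ with support of size $\geq 3$, there is a mean-preserving perturbation $\p^\star + t\nu$ within the support along which the restricted function $t \mapsto L(f_{avg},(\mu,\p^\star+t\nu,\p^\star+t\nu))$ has nonnegative leading coefficient, so the maximum on that segment is at an endpoint and the support can be collapsed to two points. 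On a three-point support $\{y_1,y_2,y_3\}$ with $\nu=(y_2-y_3,y_3-y_1,y_1-y_2)$, the required inequality takes the form $\sum_{i,j} K_\mu(y_i,y_j)\nu_i\nu_j \geq 0$, a single $3\times 3$ quadratic-form inequality to be established by exploiting the identity $f_{avg}(x,x)-g(x,x,x)=0$ and the explicit shape of $f_{avg}$. Should this convexity fail for some triples, the fallback is a finite-dimensional numerical verification on three- and four-point supports, of the same flavor as the five-parameter problem \eqref{eq:5dObj} solved in the proof of Theorem \ref{th:iid}.
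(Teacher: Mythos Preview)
The statement you are attempting to prove is a \emph{conjecture}; the paper does not prove it. In Appendix~\ref{app:ci} the authors explain that the lower bound carries over verbatim from Theorem~\ref{th:iid} (as you note), and that the three-parameter maximum $\sup_{\mu,y,z}\Phi(\mu,y,z)$ over two-point i.i.d.\ structures does equal $\frac{1}{8}(5\sqrt 5-11)$. What they cannot do is the reduction step: they write explicitly that Equation~\eqref{eq:lina} becomes quadratic rather than linear under the constraint $\p^1=\p^2$, that they ``have failed to prove that the maximum of $R(f,(\mu,\p^1,\p^1))$ is obtained in support-two distribution,'' and that only numerical simulations support this.

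Your proposal identifies precisely the same obstacle and does not overcome it. The key step---the inequality $\sum_{i,j}K_\mu(y_i,y_j)\nu_i\nu_j\geq 0$ for the mean-zero direction $\nu$ on every three-point support---is asserted as something ``to be established,'' with only a vague hint at how. Since $K_\mu$ is not a positive-definite kernel (you acknowledge this), there is no a priori reason this $3\times 3$ quadratic form is nonnegative, and no argument is given. Your fallback is not a proof either: numerically checking three- and four-point supports places no bound on the loss for distributions with larger or continuous support, because without the convexity you cannot collapse support size. In short, the proposal restates the paper's own open problem and sketches a line of attack, but the decisive inequality remains unproved, exactly as in the paper.
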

The proof of Theorem \ref{th:iid} shows that $\frac{1}{8}(5\sqrt 5-11)$ is a lower bound on the regret, even in identically distributed information structures, simply because the presented information structure is identically distributed. To deduce a lower bound, we note that indeed the maximum of the objective function $R(f,(\mu,\p^1_{\mu,y^1,z^1},\p^1_{\mu,y^1,z^1}))$ (see Equation \eqref{eq:5dObj}) in the identical case is equal to $\frac{1}{8}(5\sqrt 5-11)$. The only argument in the proof of Theorem \ref{th:iid} that fails for identical distributions is the fact that Equation \eqref{eq:lina},
\begin{align*}
R(f,(\mu,\p^1,\p^2))=\sum_{j,k} \alpha^1_j \alpha^2_k R(f,(\mu,\p^1_{\mu,y^1_j,z^1_j},\p^2_{\mu,y^2_k,z^2_k})).
\end{align*}
is no longer linear in the case where we add the identity restriction $\alpha^1=\alpha^2$, but rather turns out to be quadratic. We have failed to prove that the maximum of $R(f,(\mu,\p^1,\p^1))$ is obtained in support-two distribution $\p^1=\p^1_{\mu,x_1,x_2}$ (which will suffice to prove the conjecture); however, numerical simulations support this latter conjecture.


\section{Proof of Proposition \ref{pro:abag}}\label{sec:abag-proof}
It is sufficient to show a mixed strategy of the adversary, where an ignorant aggregator (who best replies to this strategy) cannot achieve a square loss smaller than $\frac{1}{4}-\epsilon$ whereas an ``almost Bayesian" aggregator has a square loss of $0$; that is, he always knows the state.

We define the following two information structures $\p_1(\delta),\p_2(\delta)$:

\begin{table}[h]
\centering
$\p_1(\delta):$
\begin{tabular}{cccccccc}
\multirow{4}{*}{} & \multicolumn{3}{c}{$\omega=0$}                                                    &  & \multicolumn{3}{c}{$\omega=1$}                                                    \\
                          &                             & $s_2$                    & $s'_2$                   &  &                             & $s_2$                    & $s'_2$                   \\ \cline{3-4} \cline{7-8}
                          & \multicolumn{1}{c|}{$s_1$}  & \multicolumn{1}{c|}{1/4} & \multicolumn{1}{c|}{0}   &  & \multicolumn{1}{c|}{$s_1$}  & \multicolumn{1}{c|}{0}   & \multicolumn{1}{c|}{$(1+\delta)/4$} \\ \cline{3-4} \cline{7-8}
                          & \multicolumn{1}{c|}{$s'_1$} & \multicolumn{1}{c|}{0}   & \multicolumn{1}{c|}{1/4} &  & \multicolumn{1}{c|}{$s'_1$} & \multicolumn{1}{c|}{$(1-\delta)/4$} & \multicolumn{1}{c|}{0}   \\ \cline{3-4} \cline{7-8}
\end{tabular}

\vspace*{4mm}
$\p_2(\delta):$
\begin{tabular}{ccccccc}
\multicolumn{3}{c}{$\omega = 0$}                                                                                                &  & \multicolumn{3}{c}{$\omega = 1$}                                                                                                    \\
                            & $s_2$                                           & $s'_2$                                          &  &                             & $s_2$                                             & $s'_2$                                            \\ \cline{2-3} \cline{6-7}
\multicolumn{1}{c|}{$s_1$}  & \multicolumn{1}{c|}{$(1+\delta)/(4-2\delta^2)$} & \multicolumn{1}{c|}{0}                          &  & \multicolumn{1}{c|}{$s_1$}  & \multicolumn{1}{c|}{0}                            & \multicolumn{1}{c|}{$(1-\delta^2)/(4-2\delta^2)$} \\ \cline{2-3} \cline{6-7}
\multicolumn{1}{c|}{$s'_1$} & \multicolumn{1}{c|}{0}                          & \multicolumn{1}{c|}{$(1-\delta)/(4-2\delta^2)$} &  & \multicolumn{1}{c|}{$s'_1$} & \multicolumn{1}{c|}{$(1-\delta^2)/(4-2\delta^2)$} & \multicolumn{1}{c|}{0}                            \\ \cline{2-3} \cline{6-7}
\end{tabular}
\end{table}

In $\p_1(\delta)$ the induced posteriors of the experts are as follows:
\begin{align*}
(\omega=0,s_1,s_2)) &\rightsquigarrow (\frac{1+\delta}{2+\delta},\frac{1-\delta}{2-\delta})\\
(\omega=0,s'_1,s'_2)) &\rightsquigarrow (\frac{1-\delta}{2-\delta},\frac{1+\delta}{2+\delta})\\
(\omega=1,s_1,s'_2)) &\rightsquigarrow (\frac{1+\delta}{2+\delta},\frac{1+\delta}{2+\delta})\\
(\omega=1,s'_1,s_2)) &\rightsquigarrow (\frac{1-\delta}{2-\delta},\frac{1-\delta}{2-\delta}).
\end{align*}

In $\p_2(\delta)$ the induced posteriors of the experts are as follows:
\begin{align*}
(\omega=0,s_1,s_2)) &\rightsquigarrow (\frac{1-\delta}{2-\delta},\frac{1-\delta}{2-\delta})\\
(\omega=0,s'_1,s'_2)) &\rightsquigarrow (\frac{1+\delta}{2+\delta},\frac{1+\delta}{2+\delta})\\
(\omega=1,s_1,s'_2)) &\rightsquigarrow (\frac{1-\delta}{2-\delta},\frac{1+\delta}{2+\delta})\\
(\omega=1,s'_1,s_2)) &\rightsquigarrow (\frac{1+\delta}{2+\delta},\frac{1-\delta}{2-\delta}).
\end{align*}

Assume that the adversary chooses the information structures $\p_1(\delta),\p_2(\delta)$ with equal probability $\frac{1}{2}$. An ``almost Bayesian" aggregator knows the actual information structure ($\p_1(\delta)$ or $\p_2(\delta)$), and can deduce from the pair of forecasts the state: in $\p_1(\delta)$, if forecasts are identical the state is $\omega=1$; otherwise the state is $\omega=0$. In $\p_2(\delta)$ the opposite policy holds.

Let $\sigma_1$ be the event that the information structure is $\p_1(\delta)$, the state is $\omega=0,$ and the signals are $s_1,s_2$. Let $\sigma_2$ be the event that the information structure is $\p_2(\delta)$, the state is $\omega=1$, and the signals are $s_1,s'_2$. An ignorant aggregator who knows only the mixed strategy of the adversary, when he observes the pair of forecasts $(\frac{1+\delta}{2+\delta},\frac{1-\delta}{2-\delta}),$ assigns the probabilities $$(p_1,p_2)=(\frac{\frac{1}{4}}{\frac{1}{4}+\frac{1-\delta^2}{4-2\delta^2}},
\frac{\frac{1-\delta^2}{4-2\delta^2}}{\frac{1}{4}+\frac{1-\delta^2}{4-2\delta^2}})=(\frac{2-\delta^2}{4-3\delta^2},\frac{2-2\delta^2}{4-3\delta^2})$$
to the events $(\sigma_1,\sigma_2)$ respectively. Therefore his optimal forecast is $\frac{2-2\delta^2}{4-3\delta^2}=\frac{1}{2}\pm O(\delta)$.

Similar considerations can be applied to the cases where the pair of observed forecasts are $(\frac{1-\delta}{2-\delta},\frac{1+\delta}{2+\delta}), (\frac{1-\delta}{2-\delta},\frac{1+\delta}{2+\delta}),$ and $(\frac{1+\delta}{2+\delta},\frac{1-\delta}{2-\delta})$. In all these cases we get that the optimal forecast is $\frac{1}{2}\pm O(\delta)$. Therefore the square loss of the ignorant aggregator is at least $(\frac{1}{2}-O(\delta))^2=\frac{1}{4}-O(\delta)$, and we can set small enough $\delta$ such that $O(\delta)$ in the last expression is less than $\epsilon$.

\section{Proof of Theorem \ref{th:freg}}\label{ap:reg-proof}
We construct a mixed strategy for the adversary $\sigma=\sigma(k)$ with the following two properties that are stated as lemmas.
\begin{lemma}\label{lem:r-bound}
The expected square loss in $\sigma(k)$ of a Bayesian aggregator who knows $\sigma(k)$ (but does not know the realized information structure) is at least $\frac{1}{4} - \frac{1}{3k}$.
\end{lemma}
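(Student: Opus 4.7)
The plan is to build $\sigma(k)$ as a mixture of $k$ conditionally i.i.d.\ information structures $I(1), \ldots, I(k)$ with binary signals that induce forecasts in $\{p, q\} := \{\tfrac12 - \epsilon, \tfrac12 + \epsilon\}$ for a small parameter $\epsilon = \epsilon(k) > 0$ to be tuned. I define the priors $\mu_m$ recursively from $\mu_1 = \tfrac12$ by requiring that the marginal forecast distribution under $(I(m), \omega = 1)$ coincides with the one under $(I(m+1), \omega = 0)$, and I choose the mixture weights $\pi_m$ (with $\sum_m \pi_m = 1$) so that $\pi_m \mu_m = \pi_{m+1}(1 - \mu_{m+1})$ for $1 \leq m \leq k-1$. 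Setting $a_m := \pi_m \mu_m$ and $b_m := \pi_m(1-\mu_m)$, the weight condition reads $a_m = b_{m+1}$, and the choice $\mu_1 = \tfrac12$ gives $b_1 = a_1$.

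Let $J$ be the equivalence class that merges $(e.m.1)$ with $(e.(m+1).0)$ into a common label $[m]$ for $1 \leq m \leq k-1$, and keeps $(e.1.0)$ and $(e.k.1)$ as singletons $[0]$ and $[k]$. The matching of marginals makes the forecast vector $x = (x_1, \ldots, x_n)$ conditionally independent of $\omega$ given $J$, and the weight condition forces $\p(\omega = 1 \mid J = [m]) = \tfrac12$ on every pair class. Writing $p_j(x) := \p(J = j \mid x)$, conditional independence yields $\p(\omega = 1 \mid x) = \tfrac12 + \tfrac12 (p_{[k]}(x) - p_{[0]}(x))$, so the Bayesian's expected square loss becomes
\[
\E[\p(\omega=1\mid x)(1-\p(\omega=1\mid x))] \;=\; \tfrac14 - \tfrac14\,\E\bigl[(p_{[k]}(x) - p_{[0]}(x))^2\bigr].
\]
Combining $(p_{[k]} - p_{[0]})^2 \leq p_{[k]} + p_{[0]}$ (both values lie in $[0,1]$) with $\E[p_j(x)] = \p(J = j)$ gives
\[
\E[\text{loss}] \;\geq\; \tfrac14 - \tfrac14\bigl(\p(J=[0]) + \p(J=[k])\bigr) \;=\; \tfrac14 - \tfrac{a_1 + a_k}{4},
\]
uniformly in the number of experts $n$. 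Hence the lemma reduces to the scalar inequality $a_1 + a_k \leq \tfrac{4}{3k}$.

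Normalization $\sum_m(a_m + b_m) = 1$ together with $a_m = b_{m+1}$ and $b_1 = a_1$ rearranges to $a_1 + a_k = 1 - 2\sum_{m=1}^{k-1} a_m$, so it suffices to show $\sum_{m=1}^{k-1} a_m \geq \tfrac12 - \tfrac{2}{3k}$. The successive ratios $a_{m+1}/a_m = \mu_{m+1}/(1-\mu_{m+1})$ are $1 + O(\epsilon)$ as long as $\mu_m$ stays within $O(\epsilon)$ of $\tfrac12$. I will analyze the one-dimensional recurrence $\mu_{m+1} = f(\mu_m)$ dictated by the matching condition---an explicit fractional linear transformation with $q$ as an attracting fixed point whose derivative there is $f'(q) = \bigl(\tfrac{1-2\epsilon}{1+2\epsilon}\bigr)^2 = 1 - 8\epsilon + O(\epsilon^2)$---to show that, starting from $\mu_1 = \tfrac12$, the iterates grow monotonically toward $q$ but remain inside $[\tfrac12, q]$ throughout. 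Tuning $\epsilon = \epsilon(k)$ sufficiently small (e.g., $\epsilon = c/k$ for a suitable constant $c$) keeps every $a_m$ within a bounded multiplicative factor of $1/(2k)$, so that $\sum_{m=1}^{k-1} a_m$ is within $O(1/k^2)$ of $(k-1)/(2k)$ and the target bound on $a_1 + a_k$ follows.

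The structural half of the argument---the conditional-independence observation that eliminates the dependence on $n$ and reduces the lemma to a scalar bound---is clean. The main obstacle is the quantitative analysis of the orbit of $f$: although the linearization at $q$ is transparent, the point $\mu_1 = \tfrac12$ lies at distance $\epsilon$ from the fixed point and is not in the local linear regime, so higher-order corrections must be tracked along the full trajectory. Fortunately $f$ is a fractional linear transformation, which can be iterated explicitly via conjugation with the cross-ratio, and this closed form should yield the sharp estimate on $\sum_{m=1}^{k-1} a_m$ needed to establish $a_1 + a_k \leq 4/(3k)$ for all $k \geq 2$ after appropriate tuning of $\epsilon$.
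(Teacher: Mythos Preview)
Your proposal is correct and takes essentially the same approach as the paper: the paper uses exactly this chain of $k$ binary-signal structures with $\epsilon=\tfrac{1}{4k}$, the same matching condition $I_1(m)=I_0(m+1)$, and the same balancing of weights so that the Bayesian aggregator's posterior is $\tfrac12$ off the two boundary events $(e.1.0),(e.k.1)$. Your reduction to the scalar bound $a_1+a_k\le \tfrac{4}{3k}$ via conditional independence is slightly slicker than the paper's (which passes to the Blackwell-dominating experiment where the aggregator sees the exact forecast distribution), and the ``explicit iteration of the fractional linear map'' you anticipate is precisely what the paper does by writing down the closed form $y_m=\bigl(1+r^{2(m-1)}\bigr)^{-1}$ with $r=\tfrac{1-2\epsilon}{1+2\epsilon}$ and then bounding the unnormalized weights $\beta_m\in[1,1.5]$ elementarily.
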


\begin{lemma}\label{lem:ub}
The omniscient expert incurs an expected square loss in $\sigma(k)$ of at most $e^{-\frac{n}{72k^2}}$.
\end{lemma}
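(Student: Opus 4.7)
The plan is to show that, within each information structure $I(m)$ in the support of $\sigma(k)$, the omniscient expert identifies the state $\omega$ with overwhelming probability, and then express his expected square loss as the Bayes risk. First I would observe that the omniscient expert, knowing the realized $I(m)$ and all $n$ signals, announces the Bayes-optimal forecast $\hat{x}(s_1,\ldots,s_n)=\p(\omega=1\mid s,I(m))$, so his expected square loss conditional on $I(m)$ equals the Bayes risk $\E[\hat x(1-\hat x)\mid I(m)]$. Using the pointwise inequality $\hat x(1-\hat x)\leq \min(\hat x,1-\hat x)$, this is bounded by the error probability of the Bayes classifier in $I(m)$, which in turn is dominated by the error of any fixed decision rule, in particular the threshold rule that predicts $\omega$ according to whether the empirical frequency of the signal $s_1$ lies on the $p_1^{(m)}$ or $p_0^{(m)}$ side of the midpoint $(p_0^{(m)}+p_1^{(m)})/2$, where $p_\omega^{(m)}:=\p_{I(m)}(s_1\mid\omega)$.

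Next, recalling the construction in Section \ref{sec_idea_of_proof}, signals in each $I(m)$ are conditionally i.i.d.\ with single-signal posteriors $\tfrac12\pm\epsilon$ for $\epsilon=\Theta(1/k)$. Using Bayes' rule I would verify that $|p_1^{(m)}-p_0^{(m)}|\geq c\epsilon$ for some absolute constant $c$, uniformly over all $m\leq k$. Then Hoeffding's inequality bounds the probability that the number of $s_1$ signals among $n$ conditionally i.i.d.\ draws deviates from its mean by more than $(p_1^{(m)}-p_0^{(m)})/2$ by at most $2\exp\!\bigl(-n(p_1^{(m)}-p_0^{(m)})^2/2\bigr)\leq 2\exp(-c^2 n\epsilon^2/2)$. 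Substituting $\epsilon\asymp 1/k$ and tuning the constants in the construction of $\sigma(k)$ so that $c^2 n\epsilon^2/2 \geq n/(72k^2)$ yields the claimed bound. Since the bound is uniform in $m$, averaging over the adversary's randomization preserves it.

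The most delicate step will be establishing the uniform margin $|p_1^{(m)}-p_0^{(m)}|\geq c\epsilon$. The iterative construction of $\sigma(k)$ pushes the priors $\mu_m$ toward the boundary of $[0,1]$ as $m$ grows, and in principle the Bayes relation between the fixed posteriors $\tfrac12\pm\epsilon$ and the likelihoods could squeeze this margin. I would therefore verify that throughout the range $m\leq k=\Theta(1/\epsilon)$ the priors $\mu_m$ remain bounded away from $\{0,1\}$ by an absolute constant, which via the identities $\mu_m q_1^{(m)}/[\mu_m q_1^{(m)}+(1-\mu_m)q_0^{(m)}]=\tfrac12+\epsilon$ and the analogous identity for $s_0$ yields the required $\Theta(\epsilon)$ gap between $q_0^{(m)}$ and $q_1^{(m)}$. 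Once this is in hand, the remainder is routine concentration of measure and matching the constant $72$ in the exponent is pure bookkeeping on the choices of $\epsilon$ and $k$.
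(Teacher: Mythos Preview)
Your approach is essentially the paper's: both bound the omniscient expert's square loss by the error probability of a threshold classifier on the empirical count of $s_1$ signals, invoke Hoeffding's inequality, and reduce everything to a uniform lower bound on the margin $|p_1^{(m)}-p_0^{(m)}|$. The paper establishes that margin by computing $D(k,m)=p_1^{(m)}-p_0^{(m)}$ explicitly, showing it is monotone in $y_m$, and using $y_m<9/10$ to get $D(k,m)\geq 1/(6k)$; your route via the priors would also work, but note that in the actual construction the $\mu_m$ stay within $O(1/k)$ of $\tfrac12$ (they do not drift toward the boundary), and since $\sigma(k)$ is already fully specified you must \emph{verify} the constant $72$ rather than ``tune'' the construction.
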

Note that these two lemmas complete the proof of the theorem, because for any aggregation scheme the expected loss in $\sigma(k)$ is at least $\frac{1}{4} - \frac{1}{3k}$ (because Bayesian aggregation is the optimal aggregation). Therefore, the expected relative loss is at least $\frac{1}{4} - \frac{1}{3k} - e^{-\frac{n}{72k^2}}$ in $\sigma(k)$. Therefore there exists an information structure (in the support of $\sigma(k)$) for which the relative loss is at least $\frac{1}{4} - \frac{1}{3k} - e^{-\frac{n}{72k^2}}$. We set $k=\sqrt{\frac{n}{72 \log n}}$, and obtain that the regret is at least $$R_{\mathcal{D}_n} \geq \left( \frac{1}{4} - \frac{1}{3}\sqrt{\frac{72 \log n}{n}} \right) - \frac{1}{n} \geq \frac{1}{4} - 3 \sqrt{\frac{\log n}{n}}.$$

We first present the mixed strategy of the adversary (i.e., a distribution over information structures) $\sigma(k)$ along with some preliminaries that will be useful in the proofs of the lemmas.

We fix $k$, and we consider the sequence $y_1,...,y_k$ defined by
\begin{align}\label{eq:yym}
y_{m}=\frac{1}{1+( 1- \frac{2}{2k+1} )^{2m-2}} \text{ which is bounded by } \frac{1}{2} \leq  y_{m} \leq \frac{1}{1+e^{-2}} < \frac{9}{10}.
\end{align}
In Table \ref{tb:i} we define the information structure \emph{for a single expert} $I(m)$  for $1\leq m\leq k$ (i.e., a correlated distribution over states and signals).
\begin{table}[h]
\centering
\caption{The information structure $I(m)$.}
\label{tb:i}
\begin{tabular}{ccc}
                           & $\omega=0$                         & $\omega=1$                         \\ \cline{2-3}
\multicolumn{1}{c|}{$s_0$} & \multicolumn{1}{c|}{$(\frac{1}{2}+\frac{1}{4k}) (1-y_{m})$} & \multicolumn{1}{c|}{$(\frac{1}{2}-\frac{1}{4k}) (1-y_{m})$} \\ \cline{2-3}
\multicolumn{1}{c|}{$s_1$} & \multicolumn{1}{c|}{$(\frac{1}{2}-\frac{1}{4k}) y_{m}$} & \multicolumn{1}{c|}{$(\frac{1}{2}+\frac{1}{4k}) y_{m}$} \\ \cline{2-3}
\end{tabular}
\end{table}
We denote the prior of $I(m)$ by
\begin{align}\label{eq:pri-bound}
\mu_{m}=\frac{1}{2}-\frac{1}{4k}+\frac{1}{2k}y_{m} \text{ which is bounded by } \frac{1}{2}\leq \mu_{m} \leq \frac{1}{2} + \frac{1}{5k}.
\end{align}
We set $\beta_1=1$, and we recursively define $\beta_2,...,\beta_k$ by
$\beta_{m+1}=\beta_{m} \frac{\mu_{m}}{1-\mu_{m+1}}$.
Equation \eqref{eq:pri-bound} implies that $$\beta^i\leq \beta^{i+1}= \frac{\mu^i}{1-\mu^{i+1}} \beta^{i} \leq \frac{\frac{1}{2}+\frac{1}{5k}}{\frac{1}{2}} \beta^{i}= (1 + \frac{2}{5k}) \beta^{i}.$$
Therefore,
\begin{align}\label{eq:beta}
1=\beta^1 \leq \beta^i \leq (1+\frac{2}{5k})^{i-1} \beta^1 \leq (1+\frac{2}{5k})^{k-1} \leq e^{0.4} < 1.5.
\end{align}
We normalize $(\beta_i)$ to a probability distribution by setting $\alpha_m=\frac{\beta_m}{\sum_i \beta_i}$. Note that $(\alpha_i)$ satisfies
\begin{align}\label{eq:alpha}
\alpha_m \mu_m = \alpha_{m+1}\mu_{m+1}.
\end{align}
The distribution $\sigma(k)$ assigns a probability of $\alpha_{m}$ to the information structure $I(m)$ (i.e., it randomizes over $k$ information structures). Now we prove the two lemmas.

\begin{proof}[Proof of Lemma \ref{lem:r-bound}]
By the definition of $I(m)$ (see Table \ref{tb:i}), an expert who observes $s_0$ forecasts $\frac{1}{2}-\frac{1}{4k}$, and an expert who observes $s_1$ forecasts $\frac{1}{2}+\frac{1}{4k}$. Therefore, in state $\omega=0$ the induced distribution over forecasts is
\begin{align}
I_0(m)=\begin{cases} \frac{1}{2}-\frac{1}{4k} \text{ with probability } \frac{1}{1-\mu_{m}}(\frac{1}{2}+\frac{1}{4k}) (1-y_{m}) \\
\frac{1}{2}+\frac{1}{4k} \text{ with probability } \frac{1}{1-\mu_{m}}(\frac{1}{2}-\frac{1}{4k})y_{m},
\end{cases}
\end{align}
and similarly
\begin{align}
I_1(m)=\begin{cases} \frac{1}{2}-\frac{1}{4k} \text{ with probability } \frac{1}{\mu_{m}}(\frac{1}{2}-\frac{1}{4k}) (1-y_{m}) \\
\frac{1}{2}+\frac{1}{4k} \text{ with probability } \frac{1}{\mu_{m}}(\frac{1}{2}+\frac{1}{4k})y_{m}.
\end{cases}
\end{align}
The Bayesian aggregator (who knows $\sigma$) observes a sample of size $n$ from the distribution $I_\omega(m)$ in the case where the realized (by $\sigma$) information structure is $I(m)$ and the realized (by nature) state is $\omega$. This Bayesian aggregtor suffers an expected square loss at least as high as a Bayesian aggregator (who knows $\sigma$) and observes the distribution $I_\omega(m)$ \emph{precisely} rather than just a sample from it. This follows from the fact that observing the distribution Blackwell dominates the information of a finite sample from it. Therefore it is sufficient to prove that a Bayesian aggregator (who knows $\sigma$) and observes the distribution $I_\omega(m)$ precisely incurs a square loss of at least $\frac{1}{4}-\frac{1}{k}$.

The key property of the sequence $(I(m))_{m=1,...,k}$ is that $I_1(m)=I_0(m+1)$ for $m=1,...,k-1$. To see this, we consider the likelihoods of the forecasts in the two distributions:
\begin{align*}
\frac{(\frac{1}{2}+\frac{1}{4k})y_{m}}{(\frac{1}{2}-\frac{1}{4k}) (1-y_{m})} &= \frac{(\frac{1}{2}-\frac{1}{4k})y_{m+1}}{(\frac{1}{2}+\frac{1}{4k}) (1-y_{m+1})} \Leftrightarrow \\
\frac{\frac{1}{2}+\frac{1}{4k}}{\frac{1}{2}-\frac{1}{4k}} \left(1-\frac{2}{2k+1}\right)^{-(2m-2)}&= \frac{\frac{1}{2}-\frac{1}{4k}}{\frac{1}{2}+\frac{1}{4k}} \left(1-\frac{2}{2k+1}\right)^{-2m}\Leftrightarrow \\
\left(1-\frac{2}{2k+1}\right)^{2} &= \left( \frac{\frac{1}{2}-\frac{1}{4k}}{\frac{1}{2}+\frac{1}{4k}} \right)^2
\end{align*}
By the definition of $(\alpha_i)$ (Equation \eqref{eq:alpha}), $\alpha_m \mu_m = \alpha_{m+1} (1-\mu_{m+1})$. Therefore,
a Bayesian aggregator who observes a distribution of forecasts $J=I_1(m)=I_0(m+1)$ assigns an equal probability of $\frac{1}{2}$ to the two events that the information structure is $I(m)$ and the realized state is $\omega=1$, on the one hand, and that the information structure is $I(m+1)$ and the realized state is $\omega=0$, on the other. Therefore, in either such event the Bayesian aggregator will forecast $\frac{1}{2}$, and his square loss will be $\frac{1}{4}$. Using inequalities \eqref{eq:pri-bound} and \eqref{eq:beta} we get the following bound on the aggregator's square loss:
\begin{align*}
\frac{1}{4}(1-\alpha_1(1-\mu_1)-\mu_k \alpha_k) &\geq \frac{1}{4}(1-\frac{1}{2}\frac{1}{\sum_i \beta_i}-0.51\frac{\beta_k}{\sum_i \beta_i}) \\ &\geq
\frac{1}{4}(1-\frac{1}{2k}-0.51\frac{1.5}{k}) \geq \frac{1}{4}-\frac{1}{3k}.
\end{align*}
\end{proof}
\begin{proof}[Proof of Lemma \ref{lem:ub}]
According to the information structure $I(m)$, conditional on state $\omega=1$, the expected number of experts whose posterior is $\frac{1}{2}+\frac{1}{4k}$ equals
\begin{align}\label{eq:1f}
\frac{(\frac{1}{2}+\frac{1}{4k}) y_m }{ (\frac{1}{2}+\frac{1}{4k}) y_m + (\frac{1}{2}-\frac{1}{4k}) (1-y_m)}.
\end{align}
Similarly, conditional on $\omega=0$, the expected number of experts whose posterior is $\frac{1}{2}+\frac{1}{4k}$ is
\begin{align}\label{eq:0f}
\frac{(\frac{1}{2}-\frac{1}{4k}) y_m }  {(\frac{1}{2}-\frac{1}{4k}) y_m + (\frac{1}{2}+\frac{1}{4k}) (1-y_m)}.
\end{align}
We denote by $D(k,m)$ the difference in these expectations:
$$D(k,m)=\frac{(\frac{1}{2}+\frac{1}{4k}) y_m }{ (\frac{1}{2}+\frac{1}{4k}) y_m + (\frac{1}{2}-\frac{1}{4k}) (1-y_m)} -
\frac{(\frac{1}{2}-\frac{1}{4k}) y_m }  {(\frac{1}{2}-\frac{1}{4k}) y_m + (\frac{1}{2}+\frac{1}{4k}) (1-y_m)}.$$
Note that the function
$$D(y)=\frac{(\frac{1}{2}+\frac{1}{4k}) y }{ (\frac{1}{2}+\frac{1}{4k}) y + (\frac{1}{2}-\frac{1}{4k}) (1-y)} -
\frac{(\frac{1}{2}-\frac{1}{4k}) y }  {(\frac{1}{2}-\frac{1}{4k}) y + (\frac{1}{2}+\frac{1}{4k}) (1-y)}.$$
is monotonically decreasing in $y\in [\frac{1}{2},1]$ because the derivative
$$\frac{D(y)}{dy}=-\frac{
(\frac{1}{2}+\frac{1}{4k})
(\frac{1}{2}-\frac{1}{4k})
\frac{1}{2k}
(2y-1)
}{
(\frac{1}{2}-\frac{1}{4k}+\frac{1}{2k}y)^2
(\frac{1}{2}+\frac{1}{4k}-\frac{1}{2k}y)^2
}$$
is negative for $y>\frac{1}{2}$. Now we are able to bound the expression
$D(k,m)$,
\begin{align}\label{eq:Db}
\begin{split}
D(k,m)&\geq
\frac{(\frac{1}{2}+\frac{1}{4k}) \frac{9}{10} }{ (\frac{1}{2}+\frac{1}{4k}) \frac{9}{10} + (\frac{1}{2}-\frac{1}{4k}) \frac{1}{10}} -
\frac{(\frac{1}{2}-\frac{1}{4k}) \frac{9}{10} }  {(\frac{1}{2}-\frac{1}{4k}) \frac{9}{10} + (\frac{1}{2}+\frac{1}{4k}) \frac{1}{10}} \\
&=\frac{18k+9}{20k+8}-\frac{18k-9}{20k-8}=\frac{9k}{50k^2-8}\geq \frac{1}{6k},
\end{split}
\end{align}
when the first inequality follows from inequality \eqref{eq:yym} and the monotonicity of $D(y)$.

Now we turn to the proof of the lemma. For every distribution $I(m)$ in the support of $\sigma(k)$ we introduce an aggregation scheme (which depends on $I(m)$) that guarantees a square loss of at most $e^{-\frac{n}{72k^2}}$. This obviously implies that the omniscient expert incurs at most the same loss.
Our aggregation scheme has the following simple form: it counts the fraction $q$ of experts whose forecast is $\frac{1}{2}-\frac{1}{4k}$. If $q$ is closer to the expected fraction in state $1$ (Equation \eqref{eq:1f}) then it forecasts $1$. Otherwise, if $q$ is closer to the expected fraction in state $0$ (Equation \eqref{eq:1f}) then it forecasts $0$. It never uses probabilistic forecasts.
For this simple $0/1$ aggregation scheme the expected square loss is equal to the probability of a mistake. By inequality \eqref{eq:Db} a mistake in the prediction occurs only in the case where the fraction $q$ is $\frac{1}{12k}$-far from the expected fraction. By Hoeffding's inequality the probability of this event is at most $e^{-\frac{n}{72k^2}}$, which is also a bound on the expected square loss of the presented aggregation scheme.
\end{proof}

\newpage

\section{Contradictory weather forecasts}\label{app_weather}
\begin{figure}[h]
\begin{center}
\caption{Forecast in Yahoo! website}\label{pic1}
\includegraphics[scale=1]{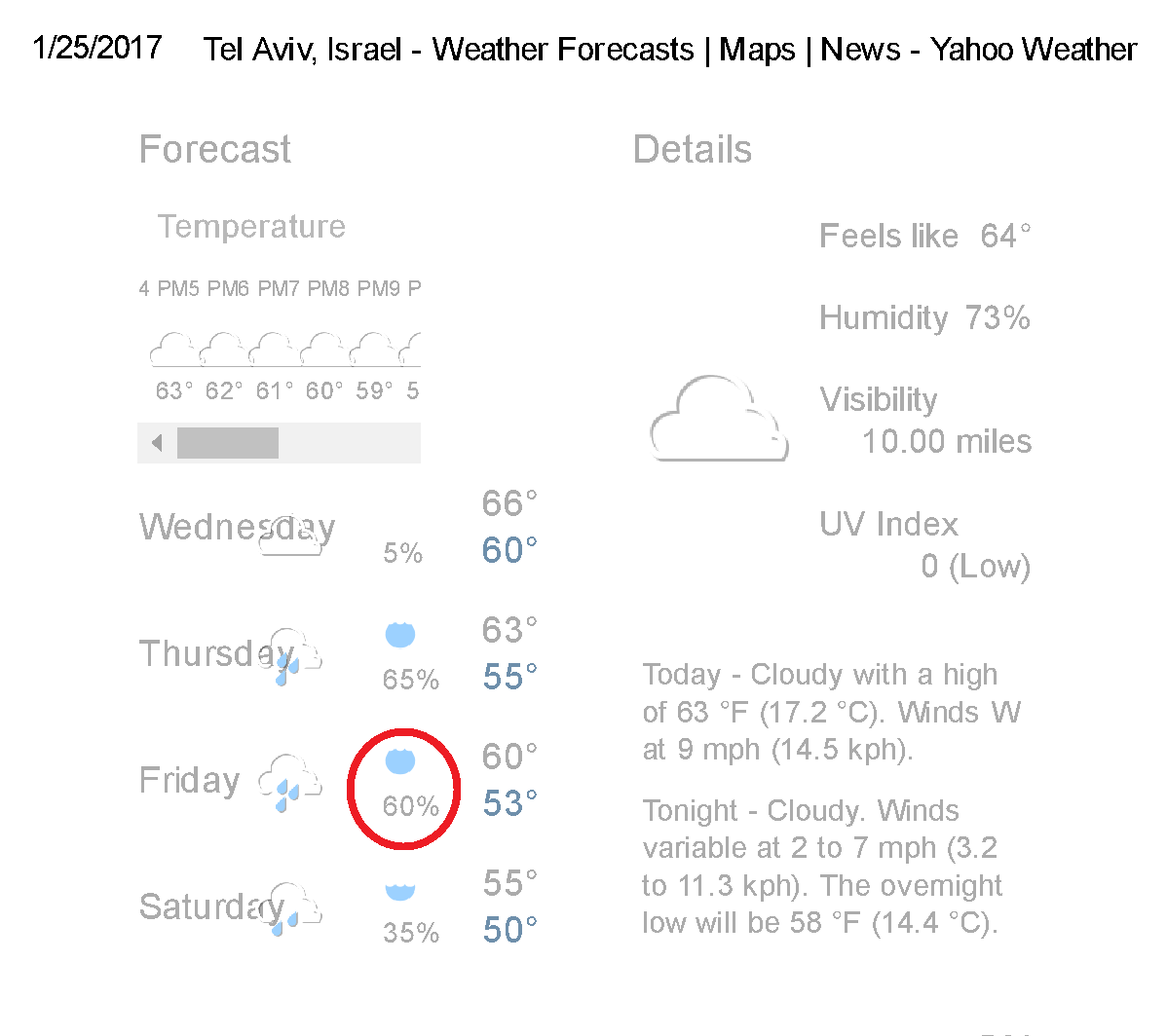}
\end{center}
\end{figure}

\begin{figure}[h]
\begin{center}
\caption{Forecast in Accuweather website}\label{pic2}
\includegraphics[scale=1]{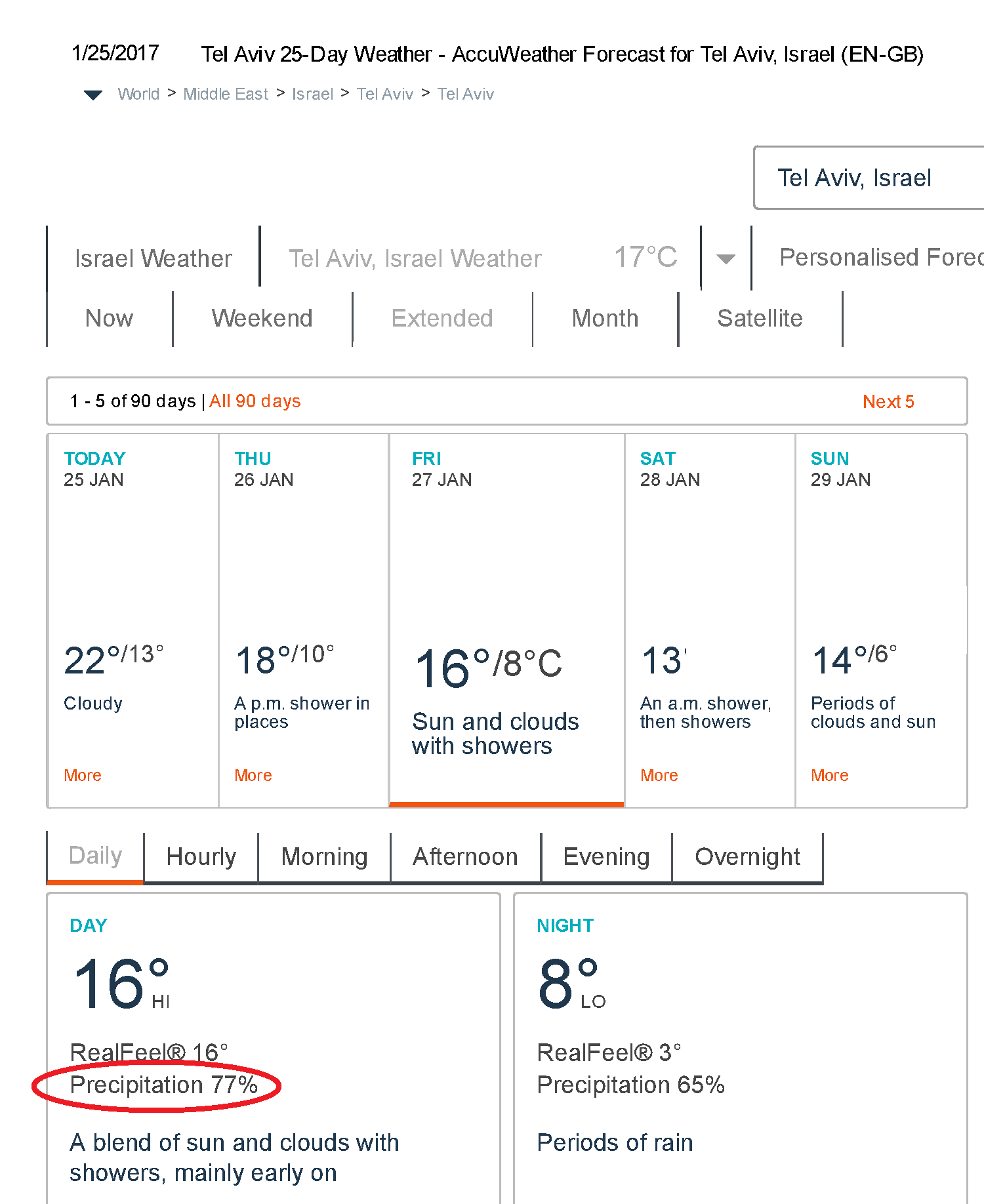}
\end{center}
\end{figure}

\begin{figure}[h]
\begin{center}
\caption{Forecast in Weather-Channel website}\label{pic3}
\includegraphics[scale=0.6]{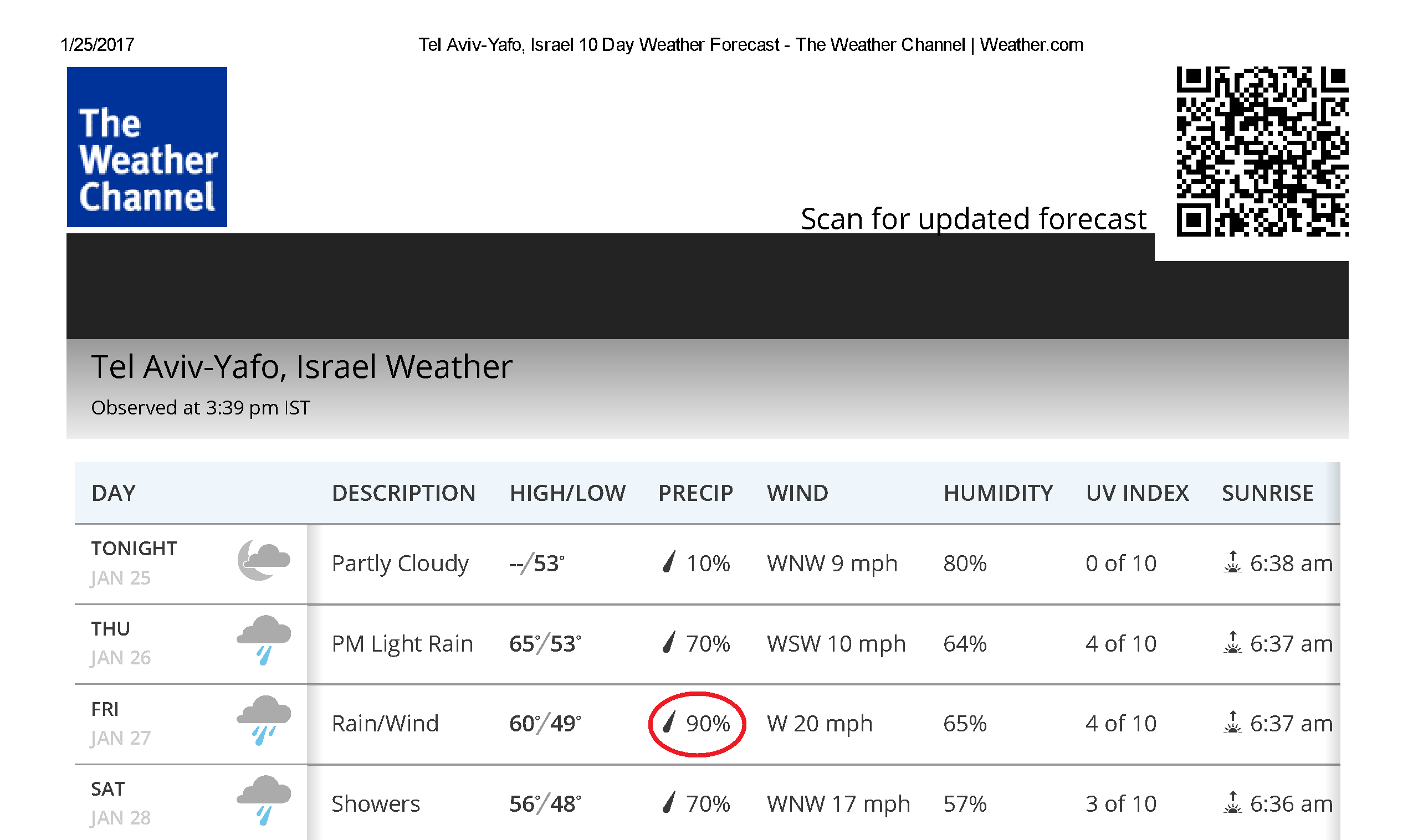}
\end{center}
\end{figure}

\end{document}